\documentclass[11pt,letterpaper,reqno]{amsart}

\title[Symplectic Semiclassical Dynamics]{Symplectic Semiclassical Wave Packet Dynamics}
\author{Tomoki Ohsawa}
\address{Department of Mathematics \& Statistics, University of Michigan--Dearborn, 4901 Evergreen Road, Dearborn, MI 48128-2406}
\author{Melvin Leok}
\address{Department of Mathematics, University of California, San Diego, 9500 Gilman Dr, La Jolla, CA 92093-0112}
\email{ohsawa@umich.edu, mleok@math.ucsd.edu}
\date{\today}

\keywords{Semiclassical mechanics, Gaussian wave packet dynamics, Hamiltonian dynamics, symplectic geometry}

\subjclass[2010]{37J15, 37J35, 70G45, 70H06, 70H33, 81Q05, 81Q20, 81Q70, 81S10}


\usepackage{graphicx,mathrsfs,subfigure}
\usepackage[margin=1in, marginpar=.5in]{geometry}

\usepackage{tikz-cd}

\usepackage[numbers,sort&compress]{natbib}

\usepackage[colorlinks=false]{hyperref}

\theoremstyle{plain}
\newtheorem{theorem}{Theorem}[section]

\newtheorem{proposition}[theorem]{Proposition}

\theoremstyle{definition}

\theoremstyle{remark}
\newtheorem{remark}[theorem]{Remark}

\def\od#1#2{\dfrac{d#1}{d#2}}
\def\pd#1#2{\dfrac{\partial #1}{\partial #2}}

\def\tod#1#2{d#1/d#2}

\def\parentheses#1{\!\left(#1\right)}
\def\brackets#1{\!\left[#1\right]}
\def\braces#1{\!\left\{#1\right\}}

\def\tr{\mathop{\mathrm{tr}}\nolimits}


\def\norm#1{\left\|#1\right\|}

\def\DS{\displaystyle}
\def\R{\mathbb{R}}
\def\C{\mathbb{C}}

\def\defeq{\mathrel{\mathop:}=}
\def\eqdef{=\mathrel{\mathop:}}
\def\setdef#1#2{ \left\{ #1 \ |\ #2 \right\} }
\def\ip#1#2{\left\langle#1,#2\right\rangle}
\def\exval#1{\left\langle#1\right\rangle}
\def\texval#1{\langle#1\rangle}

\def\ket#1{|#1\rangle}
\def\diag{\operatorname{diag}}

\renewcommand{\Re}{\operatorname{Re}}
\renewcommand{\Im}{\operatorname{Im}}
\def\eps{\varepsilon}

\def\d{{\bf d}}
\def\i{{\bf i}}

\begin{document}

\footskip=.6in

\begin{abstract}
  The paper gives a symplectic-geometric account of semiclassical Gaussian wave packet dynamics.
  We employ geometric techniques to ``strip away'' the symplectic structure behind the time-dependent Schr\"odinger equation and incorporate it into semiclassical wave packet dynamics.
  We show that the Gaussian wave packet dynamics is a Hamiltonian system with respect to the symplectic structure, apply the theory of symplectic reduction and reconstruction to the dynamics, and discuss dynamic and geometric phases in semiclassical mechanics.
  A simple harmonic oscillator example is worked out to illustrate the results: We show that the reduced semiclassical harmonic oscillator dynamics is completely integrable by finding the action--angle coordinates for the system, and calculate the associated dynamic and geometric phases explicitly.
  We also propose an asymptotic approximation of the potential term that provides a practical semiclassical correction term to the approximation by Heller.
  Numerical results for a simple one-dimensional example show that the semiclassical correction term realizes a semiclassical tunneling.
\end{abstract}

\maketitle

\section{Introduction}
\subsection{Background}
Gaussian wave packet dynamics is an essential example in time-dependent semiclassical mechanics that nicely illustrates the classical--quantum correspondence, as well as a widely-used tool in simulations of semiclassical mechanics, particularly in chemical physics~(see, e.g., \citet{Ta2007} and \citet{Lu2008}).
A Gaussian wave packet is a particular form of wave function whose motion is governed by a trajectory of a classical ``particle''; hence it provides an explicit connection between classical and quantum dynamics by placing ``(quantum mechanical) wave flesh on classical bones.''~\cite{BeMo1972,Ta2007}

The most remarkable feature of Gaussian wave packet dynamics is that, for quadratic potentials, the Gaussian wave packet is known to give an {\em exact} solution of the Schr\"odinger equation if and only if the underlying ``particle'' dynamics satisfies a certain set of ordinary differential equations.
Even with non-quadratic potentials, Gaussian wave packet dynamics is an effective tool to approximate the full quantum dynamics, as demonstrated by, among others, a series of works by \citet{He1975a,He1976b,He1981} and \citet{Ha1980, Ha1998}.
See also \citet{RuSm2013} for a use of the Gaussian wave packets to transform the Schr\"odinger equation into more computationally tractable equations in the semiclassical regime.

One popular approach to semiclassical dynamics is the use of propagators obtained by semiclassical approximations of Feynman's path integral~\cite{FeHi2010}.
Whereas the original work of \citet{He1975a} does not involve the path integral, a number of methods have been developed by applying these propagators to Gaussian wave packets to derive the time evolution of semiclassical systems (see, e.g., \citet{Heller-LesHouches}, \citet{Gr2006}, \citet[Chapter~10]{Ta2007} and references therein).

On the other hand, it also turns out that Gaussian wave packet dynamics has nice geometric structures associated with it.
\citet{An1988b,An1990,An1991} showed that the {\em frozen} Gaussian wave packet dynamics inherits symplectic and Riemannian structures from quantum mechanics.
\citet{FaLu2006} (see also \citet[Section~II.4]{Lu2008}) found the symplectic/Poisson structure of the {\em``thawed''} spherical Gaussian wave packet dynamics (which is more general than the frozen one) and developed a numerical integrator that preserve the geometric structure.
It is worth noting that \citet{He1975a} decouples the classical and quantum parts of the dynamics and only recognizes the classical part as a Hamiltonian system, whereas \citet{FaLu2006} show that the whole system is Hamiltonian.

\subsection{Main Results and Outline}
The main contribution of the present paper is to provide a symplectic and Hamiltonian view of Gaussian wave packet dynamics.
Our main source of inspiration is the series of works by Lubich and his collaborators compiled in \citet{Lu2008}.
Much of the work here builds on or gives an alternative view of their results.
Our focus here is the symplectic point of view, as opposed to the mainly variational and Poisson ones of \citet{FaLu2006} and \citet{Lu2008}.
Also, our results give a multi-dimensional generalization of the work by \citet{PaSc1994} from a mathematical---mainly geometric---point of view.

In Section~\ref{sec:SymplecticModelReduction}, we start with a review of some key results in \cite{Lu2008} from the symplectic point of view, and then consider the non-spherical Gaussian wave packet dynamics in Section~\ref{sec:Non-Spherical}.
The main result in Section~\ref{sec:Non-Spherical} shows that the non-spherical Gaussian wave packet dynamics is a Hamiltonian system with respect to the symplectic structure found by a technique outlined in Section~\ref{sec:SymplecticModelReduction}; the result is shown to specialize to the spherical case of \citet{FaLu2006} in Section~\ref{sec:Spherical}.
Then, in Section~\ref{sec:Reduction}, we exploit the symplectic point of view to discuss the symplectic reduction of the non-spherical Gaussian wave packet dynamics.
This naturally leads to the reconstruction of the full dynamics and the associated dynamic and geometric phases in Section~\ref{sec:ReconstructionAndGeometricPhase}.
Section~\ref{sec:Potential} gives an asymptotic analysis of the potential terms present in the Hamiltonian formulation.
The potential terms usually cannot be evaluated analytically and one may need to approximate them for practical applications.
We propose an asymptotic approximation that provides a correction term to the locally quadratic approximation of Heller.
Finally, we consider two simple examples: the semiclassical harmonic oscillator in Sections~\ref{sec:SemiclassicalHO} and a semiclassical tunneling in \ref{sec:SemiclassicalTunneling}.
The semiclassical harmonic oscillator is completely integrable: We find action--angle coordinates using the Darboux coordinates found in Section~\ref{sec:Spherical} and the associated Hamilton--Jacobi equation, and also find the explicit formula for the reconstruction phase.
The semiclassical tunneling example is solved numerically to demonstrate a classically forbidden motion of a semiclassical particle.

\section{Symplectic Model Reduction for Quantum Mechanics}
\label{sec:SymplecticModelReduction}
This section shows how one may reduce an infinite-dimensional quantum dynamics to a finite-dimensional semiclassical dynamics from the symplectic-geometric point of view.
It will also be shown that the finite-dimensional dynamics defined below is optimal in the sense described in Section~\ref{ssec:metrics}.
We follow \citet[Chapter~II]{Lu2008} with more emphasis on the geometric aspects to better understand the geometry behind the model reduction.

\subsection{Symplectic View of the Schr\"odinger Equation}
Let $\mathcal{H}$ be a complex (often infinite-dimensional) Hilbert space equipped with a (right-linear) inner product $\ip{\cdot}{\cdot}$.
It is well-known~(see, e.g., \citet[Section~2.2]{MaRa1999}) that the two-form $\Omega$ on $\mathcal{H}$ defined by
\begin{equation*}
  \Omega(\psi_{1}, \psi_{2}) = 2\hbar \Im\ip{\psi_{1}}{\psi_{2}}
\end{equation*}
is a symplectic form, and hence $\mathcal{H}$ is a symplectic vector space.
One may also define the one-form $\Theta$ on $\mathcal{H}$ by
\begin{equation*}
  \Theta(\psi) = -\hbar \Im\ip{\psi}{{\bf d}\psi};
  \quad
  \ip{\Theta(\psi)}{\varphi} = -\hbar \Im\ip{\psi}{\varphi}.
\end{equation*}
Then, one has $\Omega = -{\bf d}\Theta$.
Now, given a Hamiltonian operator\footnote{In general, the Hamiltonian operator $\hat{H}$ may not be defined on the whole $\mathcal{H}$.} $\hat{H}$ on $\mathcal{H}$, we may write the expectation value of the Hamiltonian $\texval{\hat{H}}: \mathcal{H} \to \R$ as
\begin{equation*}
 \texval{\hat{H}}(\psi) \defeq \texval{\psi, \hat{H}\psi}.
\end{equation*}
Then, the corresponding Hamiltonian flow
\begin{equation*}
  X_{\exval{\hat{H}}} = \dot{\psi}\,\pd{}{\psi}  
\end{equation*}
on $\mathcal{H}$ defined by
\begin{equation}
  \label{eq:SchroedingerEq}
  {\bf i}_{X_{\texval{\hat{H}}}} \Omega = {\bf d}\texval{\hat{H}}
\end{equation}
gives the Schr\"odinger equation
\begin{equation*}
  \dot{\psi} = -\frac{i}{\hbar}\hat{H} \psi.
\end{equation*}

\subsection{Symplectic Model Reduction}
Let $\mathcal{M}$ be a {\em finite}-dimensional manifold and suppose there exists an embedding $\iota: \mathcal{M} \hookrightarrow \mathcal{H}$ and hence $\iota(\mathcal{M})$ is a submanifold of $\mathcal{H}$.
\begin{proposition}[{\citet[Section~II.1]{Lu2008}}]
  \label{prop:embedding}
  If the manifold $\mathcal{M}$ is equipped with an almost complex structure $J_{y}: T_{y}\mathcal{M} \to T_{y}\mathcal{M}$ such that
  \begin{equation}
    \label{eq:AlmostComplexStructure}
    T_{y}\iota \circ J_{y} = i\cdot T_{y}\iota
  \end{equation}
  for any $y \in \mathcal{M}$, then $\mathcal{M}$ is a symplectic manifold with symplectic form $\Omega_{\mathcal{M}} \defeq \iota^{*}\Omega$.
\end{proposition}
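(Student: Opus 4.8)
The plan is to check that $\Omega_{\mathcal{M}} \defeq \iota^{*}\Omega$ is both closed and non-degenerate, the two defining properties of a symplectic form. Closedness is essentially free: since $\Omega = -{\bf d}\Theta$ is exact and pullback commutes with the exterior derivative, we have $\Omega_{\mathcal{M}} = \iota^{*}(-{\bf d}\Theta) = -{\bf d}(\iota^{*}\Theta)$, so $\Omega_{\mathcal{M}}$ is itself exact and therefore ${\bf d}\Omega_{\mathcal{M}} = 0$. All of the real content is in non-degeneracy, and this is precisely where the almost complex structure $J$ enters.

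The key idea is to use $J$ to manufacture a Riemannian metric out of $\Omega_{\mathcal{M}}$. I would define the bilinear form
\begin{equation*}
  g_{y}(v, w) \defeq \Omega_{\mathcal{M}}(v, J_{y}w)
\end{equation*}
on each $T_{y}\mathcal{M}$ and show it is symmetric and positive definite. Unwinding the pullback, using the compatibility condition $T_{y}\iota \circ J_{y} = i\cdot T_{y}\iota$, and invoking right-linearity of the inner product, the computation should collapse to
\begin{equation*}
  g_{y}(v, w) = 2\hbar\Im\ip{T_{y}\iota(v)}{i\cdot T_{y}\iota(w)} = 2\hbar\Re\ip{T_{y}\iota(v)}{T_{y}\iota(w)}.
\end{equation*}
This identity is the heart of the argument: it exhibits $g_{y}$ as $2\hbar$ times the real part of the Hilbert-space inner product restricted to the image of the tangent map. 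Symmetry then follows from $\Re\ip{\phi}{\chi} = \Re\ip{\chi}{\phi}$, and positive definiteness follows because $\Re\ip{\phi}{\phi} = \norm{\phi}^{2}$ together with injectivity of $T_{y}\iota$ (as $\iota$ is an embedding): for $v \neq 0$ one gets $g_{y}(v, v) = 2\hbar\norm{T_{y}\iota(v)}^{2} > 0$, provided $\hbar > 0$.

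Once $g_{y}$ is known to be positive definite, non-degeneracy of $\Omega_{\mathcal{M}}$ is immediate. If $v \in T_{y}\mathcal{M}$ satisfies $\Omega_{\mathcal{M}}(v, w) = 0$ for all $w$, then choosing $w = J_{y}v$ gives $0 = \Omega_{\mathcal{M}}(v, J_{y}v) = g_{y}(v, v)$, which forces $v = 0$. Hence $\Omega_{\mathcal{M}}$ is a closed, non-degenerate two-form, so $\mathcal{M}$ is symplectic. I expect the main obstacle to be purely a matter of bookkeeping rather than ideas: one must track the sign in $\Im\ip{\phi}{i\chi} = \Re\ip{\phi}{\chi}$, which hinges on the inner product being right-linear; with the opposite (left-linear) convention the sign flips, and one would instead set $g_{y}(v, w) \defeq \Omega_{\mathcal{M}}(J_{y}v, w)$ or absorb an overall sign to keep the metric positive definite.
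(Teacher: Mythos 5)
Your proposal is correct and follows essentially the same route as the paper: closedness via naturality of pullback and exactness, and non-degeneracy by pairing $v$ against $J_{y}v$ and using the compatibility condition \eqref{eq:AlmostComplexStructure} to reduce $\Omega_{\mathcal{M}}(v, J_{y}v)$ to $2\hbar\Re\ip{T_{y}\iota(v)}{T_{y}\iota(v)} = 2\hbar\norm{T_{y}\iota(v)}^{2}$, then invoking injectivity of $T_{y}\iota$. The only cosmetic difference is that you package the key computation as an induced Riemannian metric (which the paper introduces anyway, in Section~\ref{ssec:metrics}, for the least-squares result), and you are in fact slightly more careful than the paper in attributing injectivity of $T_{y}\iota$ to $\iota$ being an embedding rather than merely injective.
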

The proof of \citet{Lu2008} is based on the projection from $\mathcal{H}$ to the tangent space $T_{\iota(y)}\iota(\mathcal{M})$ of the embedded manifold $\iota(\mathcal{M})$.
We give a proof from a slightly different perspective using the embedding $\iota: \mathcal{M} \hookrightarrow \mathcal{H}$ more explicitly.
As we shall see later, the embedding $\iota$ is the key ingredient exploited to define geometric structures on the semiclassical side as the pull-backs of the corresponding structures on the quantum side.
\begin{proof}
  It is easy to show that $\Omega_{\mathcal{M}}$ is closed: $\d\Omega_{\mathcal{M}} = \iota^{*}\d\Omega = 0$.
  We then need to show that $\Omega_{\mathcal{M}}$ is non-degenerate, i.e., $T_{y}\mathcal{M} \cap (T_{y}\mathcal{M})^{\perp} = \{0\}$, where $(\,\cdot\,)^{\perp}$ stands for the symplectic complement with respect to $\Omega_{\mathcal{M}}$.
  Let $v_{y} \in T_{y}\mathcal{M} \cap (T_{y}\mathcal{M})^{\perp}$; then $J_{y}(v_{y}) \in T_{y}\mathcal{M}$ and thus
  \begin{align*}
    0 &= \Omega_{\mathcal{M}}\parentheses{ v_{y}, J_{y}(v_{y}) }
    \\
    & = \Omega\parentheses{ T_{y}\iota(v_{y}), T_{y}\iota \circ J_{y}(v_{y}) }
    \\
    & = 2\hbar \Im\ip{ T_{y}\iota(v_{y}) }{ i\,T_{y}\iota(v_{y}) }
    \\
    & = 2\hbar \Re\ip{ T_{y}\iota(v_{y}) }{ T_{y}\iota(v_{y}) }
    \\
    & = 2\hbar \ip{ T_{y}\iota(v_{y}) }{ T_{y}\iota(v_{y}) }.
  \end{align*}
  Hence $T_{y}\iota(v_{y}) = 0$ and so $v_{y} = 0$ since $\iota$ is injective.
  Therefore, $T_{y}\mathcal{M} \cap (T_{y}\mathcal{M})^{\perp} = \{0\}$ and thus $\mathcal{M}$ is symplectic with the symplectic form $\Omega_{\mathcal{M}}$.
\end{proof}

Now, define a Hamiltonian $H: \mathcal{M} \to \R$ by the pull-back
\begin{equation*}
  H \defeq \iota^{*} \texval{\hat{H}} = \texval{\hat{H}} \circ \iota.
\end{equation*}
Then, we may define a Hamiltonian system on $\mathcal{M}$ by
\begin{equation}
  \label{eq:HamiltonianSystem-X_H}
  {\bf i}_{X_{H}} \Omega_{\mathcal{M}} = {\bf d}H.  
\end{equation}
Hence we ``reduced'' the infinite-dimensional Hamiltonian dynamics $X_{\texval{\hat{H}}}$ on $\mathcal{H}$ to the finite-dimensional Hamiltonian dynamics $X_{H}$ on $\mathcal{M}$.

\begin{remark}
  \label{rem:Dirac-Frenkel}
  One may also take a variational approach using the Dirac--Frenkel variational principle (see, e.g., \citet[Section~II.1]{Lu2008} and references therein) to derive \eqref{eq:HamiltonianSystem-X_H}; this is also a variational principle behind other time-dependent approximation methods such as the time-dependent Hartree--Fock method (see, e.g., \citet[Section~II.3]{Lu2008}).
\end{remark}

\begin{remark}
  The idea of restricting a Hamiltonian dynamics on a (pre-)symplectic manifold to a symplectic submanifold is reminiscent of the constraint algorithm of \citet{GoNeHi1978} and \citet{GoNe1979b, GoNe1980}.
  However, in our setting, both the original and restricted (or reduced) dynamics are defined on {\em strictly} symplectic (as opposed to pre-symplectic) manifolds and thus we do not need to resort to the constraint algorithm as long as the conditions in Proposition~\ref{prop:embedding} are satisfied.
\end{remark}

If we write the embedding $\iota: \mathcal{M} \hookrightarrow \mathcal{H}$ explicitly as $y \mapsto \chi(y)$, then one may first find a symplectic one-form $\Theta_{\mathcal{M}}$ on $\mathcal{M}$ as the pull-back of $\Theta$ by $\iota$, i.e.,
\begin{equation}
\label{eq:Theta_M-def}
  \Theta_{\mathcal{M}} \defeq \iota^{*}\Theta = -\hbar \Im\ip{ \chi }{ \pd{\chi}{y^{j}} } {\bf d}y^{j}.
\end{equation}
Then, the symplectic form $\Omega_{\mathcal{M}} \defeq \iota^{*}\Omega$ is given by
\begin{equation*}
  \Omega_{\mathcal{M}} = -{\bf d}\Theta_{\mathcal{M}}.
\end{equation*}
On the other hand, one can calculate the Hamiltonian $H: \mathcal{M} \to \R$ as follows:
\begin{equation}
  \label{eq:H-def}
  H(y) = \texval{\chi(y), \hat{H}\chi(y)}.
\end{equation}

\subsection{Riemannian Metrics and Least Squares Approximation}
\label{ssec:metrics}
As shown by \citet[Section~II.1.2]{Lu2008}, it turns out that the the finite-dimensional dynamics $X_{H}$ is the least squares approximation to the original dynamics $X_{\texval{\hat{H}}}$ in the sense we will describe below.
Again, \citet{Lu2008} exploits the projection from $\mathcal{H}$ to the tangent space $T_{\iota(y)}\iota(\mathcal{M})$, but we give an alternative account using the metrics naturally induced on $\mathcal{H}$ and $\mathcal{M}$.

First recall (see, e.g., \citet[Section~5.3]{MaRa1999} and \citet[Section~5.1.1]{ChJa2004}) that any complex Hilbert space $\mathcal{H}$ is equipped with a Riemannian metric naturally induced by its inner product.
In our setting, we may define
\begin{equation*}
  g(\psi_{1}, \psi_{2}) \defeq 2\hbar \Re\ip{\psi_{1}}{\psi_{2}}
\end{equation*}
so that it is compatible with the symplectic structure $\Omega$ in the sense that
\begin{equation}
  \label{eq:compatibility}
  g(i\psi_{1}, \psi_{2}) = \Omega(\psi_{1}, \psi_{2})
  \quad \text{and}\quad
  \Omega(\psi_{1}, i\psi_{2}) = g(\psi_{1}, \psi_{2}).
\end{equation}
Then, we may induce a metric on $\mathcal{M}$ by the pull-back
\begin{equation*}
 g_{\mathcal{M}} \defeq \iota^{*}g,
\end{equation*}
and thus we may define norms $\norm{\,\cdot\,}$ and $\norm{\,\cdot\,}_{\mathcal{M}}$ for tangent vectors on $\mathcal{H}$ and $\mathcal{M}$, respectively, as follows:
\begin{equation*}
  \norm{X} \defeq \sqrt{ g(X,X) },
  \qquad
  \norm{v}_{\mathcal{M}} \defeq \sqrt{ g_{\mathcal{M}}(v,v) }.
\end{equation*}

\begin{proposition}[{\citet[Section~II.1.2]{Lu2008}}]
  If the manifold $\mathcal{M}$ is equipped with an almost complex structure $J_{y}: T_{y}\mathcal{M} \to T_{y}\mathcal{M}$ that satisfies \eqref{eq:AlmostComplexStructure}, then the the Hamiltonian vector field $X_{H}$ on $\mathcal{M}$ defined by \eqref{eq:HamiltonianSystem-X_H} is the least squares approximation among the vector fields on $\mathcal{M}$ to the vector field $X_{\exval{\hat{H}}}$ defined by the Schr\"odinger equation~\eqref{eq:SchroedingerEq}: For any $y \in \mathcal{M}$ let $\eta \defeq \iota(y) \in \mathcal{H}$; then, for any $w_{y} \in T_{y}\mathcal{M}$,
  \begin{equation*}
    \| X_{\texval{\hat{H}}}(\eta) - T_{y}\iota(w_{y}) \|^{2} \ge \| X_{\texval{\hat{H}}}(\eta) - T_{y}\iota(X_H(y)) \|^{2}
    = \| X_{\texval{\hat{H}}}(\eta) \|^{2} - \| X_{H}(y) \|_{\mathcal{M}}^{2},
  \end{equation*}
  where the equality holds if and only if $w_{y} = X_{H}(y)$.
\end{proposition}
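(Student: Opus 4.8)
The plan is to recognize the claim as an orthogonal-projection statement: $T_y\iota(X_H(y))$ ought to be the $g$-orthogonal projection of $X_{\texval{\hat{H}}}(\eta)$ onto the subspace $T_y\iota(T_y\mathcal{M}) \subset T_\eta\mathcal{H}$. The entire result then follows from a single orthogonality identity,
\[
  g\parentheses{ X_{\texval{\hat{H}}}(\eta) - T_y\iota(X_H(y)),\, T_y\iota(w_y) } = 0
  \quad\text{for all } w_y \in T_y\mathcal{M}.
\]
Granting this, I would split $X_{\texval{\hat{H}}}(\eta) - T_y\iota(w_y) = \brackets{ X_{\texval{\hat{H}}}(\eta) - T_y\iota(X_H(y)) } + T_y\iota\parentheses{ X_H(y) - w_y }$ and expand the squared $g$-norm; the cross term vanishes by orthogonality, yielding
\[
  \norm{ X_{\texval{\hat{H}}}(\eta) - T_y\iota(w_y) }^{2}
  = \norm{ X_{\texval{\hat{H}}}(\eta) - T_y\iota(X_H(y)) }^{2} + \norm{ X_H(y) - w_y }_{\mathcal{M}}^{2}.
\]
This gives the inequality at once, with equality exactly when $T_y\iota(X_H(y) - w_y) = 0$, i.e.\ $w_y = X_H(y)$ by injectivity of $T_y\iota$. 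The closing identity $\norm{ X_{\texval{\hat{H}}}(\eta) - T_y\iota(X_H(y)) }^{2} = \norm{ X_{\texval{\hat{H}}}(\eta) }^{2} - \norm{ X_H(y) }_{\mathcal{M}}^{2}$ comes from applying orthogonality once more with $w_y = X_H(y)$ and noting $\norm{ T_y\iota(X_H(y)) }^{2} = \norm{ X_H(y) }_{\mathcal{M}}^{2}$.

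Everything therefore reduces to the orthogonality identity, which is where the compatibility relations \eqref{eq:compatibility} and the almost complex structure \eqref{eq:AlmostComplexStructure} do the work. I would evaluate the two inner products separately and show each equals $\d H(J_y w_y)$. For the quantum-side term, using $g(\psi_1,\psi_2) = \Omega(\psi_1, i\psi_2)$ from \eqref{eq:compatibility} and then \eqref{eq:AlmostComplexStructure},
\[
  g\parentheses{ X_{\texval{\hat{H}}}(\eta),\, T_y\iota(w_y) }
  = \Omega\parentheses{ X_{\texval{\hat{H}}}(\eta),\, i\,T_y\iota(w_y) }
  = \Omega\parentheses{ X_{\texval{\hat{H}}}(\eta),\, T_y\iota(J_y w_y) },
\]
and the defining equation \eqref{eq:SchroedingerEq} converts this into $\d\texval{\hat{H}}\parentheses{ T_y\iota(J_y w_y) } = \parentheses{ \iota^{*}\d\texval{\hat{H}} }(J_y w_y) = \d H(J_y w_y)$ since $H = \iota^{*}\texval{\hat{H}}$.

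For the reduced-side term, I would first record that the pulled-back pair inherits compatibility: for $v_1, v_2 \in T_y\mathcal{M}$, \eqref{eq:AlmostComplexStructure} and \eqref{eq:compatibility} give $\Omega_{\mathcal{M}}(v_1, J_y v_2) = \Omega\parentheses{ T_y\iota\, v_1, i\,T_y\iota\, v_2 } = g\parentheses{ T_y\iota\, v_1, T_y\iota\, v_2 } = g_{\mathcal{M}}(v_1, v_2)$. Hence, using \eqref{eq:HamiltonianSystem-X_H} for $X_H$,
\[
  g\parentheses{ T_y\iota(X_H(y)),\, T_y\iota(w_y) }
  = g_{\mathcal{M}}(X_H(y), w_y)
  = \Omega_{\mathcal{M}}(X_H(y), J_y w_y)
  = \d H(J_y w_y).
\]
The two terms coincide, so their difference is zero and orthogonality holds. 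The only real obstacle is bookkeeping: keeping track of where the factor $i$ and the structure $J_y$ enter, and applying the two halves of \eqref{eq:compatibility} consistently on the quantum and reduced sides. Conceptually there is nothing more to it --- both $X_{\texval{\hat{H}}}$ and $X_H$ are defined by the \emph{same} Hamiltonian paired against compatible symplectic/metric data related by $\iota$, and it is this shared origin that forces the residual to be $g$-orthogonal to the tangent space of $\iota(\mathcal{M})$.
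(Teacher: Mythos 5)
Your proposal is correct and is essentially the paper's own argument in different packaging: the orthogonality identity you isolate---both sides equaling ${\bf d}H(J_{y}w_{y})$ via the compatibility relations \eqref{eq:compatibility}, the condition \eqref{eq:AlmostComplexStructure}, and the two Hamiltonian defining equations---is precisely the paper's chain of equalities $g\parentheses{ X_{\texval{\hat{H}}}, T\iota(w) } = \Omega\parentheses{ X_{\texval{\hat{H}}}, T\iota \circ J(w) } = {\bf d}H \cdot J(w) = \Omega_{\mathcal{M}}\parentheses{ X_{H}, J(w) } = g_{\mathcal{M}}\parentheses{ X_{H}, w }$, merely read as the statement that the residual $X_{\texval{\hat{H}}} - T\iota(X_{H})$ is $g$-orthogonal to $T\iota(T_{y}\mathcal{M})$. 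The paper then completes the square in the direct expansion of $\| X_{\texval{\hat{H}}} - W \|^{2}$ where you invoke the Pythagorean decomposition, but this is the same algebra, so the two proofs coincide in substance.
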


\begin{proof}
  Notice first that the inclusion map $\iota$ pulls back the compatible triple---metric, symplectic form, and complex structure---to $\mathcal{M}$, i.e., Eq.~\eqref{eq:compatibility} implies, for any $v, w \in T\mathcal{M}$, 
  \begin{equation*}
    g_{\mathcal{M}}(J(v), w) = \Omega_{\mathcal{M}}(v, w)
    \quad \text{and}\quad
    \Omega_{\mathcal{M}}(v, J(w)) = g_{\mathcal{M}}(v, w).
  \end{equation*}
  We may then estimate the difference between $X_{\texval{\hat{H}}}$ and $W \defeq T\iota(w)$ for any $w \in T\mathcal{M}$ as follows:
  \begin{align*}
    \| X_{\texval{\hat{H}}} - W \|^{2}
    &= g\parentheses{ X_{\texval{\hat{H}}} - W, X_{\texval{\hat{H}}} - W }
    \\
    &= g\parentheses{ X_{\texval{\hat{H}}}, X_{\texval{\hat{H}}} }
    - 2g\parentheses{ X_{\texval{\hat{H}}}, W }
    + g\parentheses{ W, W },
  \end{align*}
  where
  \begin{align*}
    g\parentheses{ X_{\texval{\hat{H}}}, W }
    &= \Omega\parentheses{ X_{\texval{\hat{H}}}, iW }
    \\
    &= \Omega\parentheses{ X_{\texval{\hat{H}}}, T\iota \circ J(w) }
    \\
    &= {\bf d}\texval{\hat{H}} \cdot T\iota \circ J(w)
    \\
    &= {\bf d}(\iota^{*} \texval{\hat{H}}) \cdot J(w)
    \\
    &= {\bf d}H \cdot J(w)
    \\
    &= \Omega_{\mathcal{M}}\parentheses{ X_{H}, J(w) }
    \\
    &= g_{\mathcal{M}}\parentheses{ X_{H}, w },
  \end{align*}
  and $g\parentheses{ W, W } = g_{\mathcal{M}}(w, w)$.
  Therefore,
  \begin{align*}
    \| X_{\texval{\hat{H}}} - W \|^{2}
    &= g\parentheses{ X_{\texval{\hat{H}}}, X_{\texval{\hat{H}}} }
    - 2g_{\mathcal{M}}\parentheses{ X_{H}, w }
    + g_{\mathcal{M}}(w, w)
    \\
    &= \| X_{\texval{\hat{H}}} \|^{2} - \| X_{H} \|_{\mathcal{M}}^{2}
    + g_{\mathcal{M}}\parentheses{ X_{H}-w , X_{H}-w }
    \\
    &\ge \| X_{\texval{\hat{H}}} \|^{2} - \| X_{H} \|_{\mathcal{M}}^{2},
  \end{align*}
  where the equality holds if and only if $w = X_{H}$.
\end{proof}

\section{Gaussian Wave Packet Dynamics}
\label{sec:Non-Spherical}
\subsection{Gaussian Wave Packets}
\label{ssec:GaussianWPDyn}
In particular, let $\mathcal{H} \defeq L^{2}(\R^{d})$ with the standard right-linear inner product $\ip{\cdot}{\cdot}$ and $\hat{H}$ be the Schr\"odinger operator:
\begin{equation*}
  \hat{H} = -\frac{\hbar^{2}}{2m} \Delta + V(x),
\end{equation*}
where $\Delta$ is the Laplacian in $\R^{d}$.

Let us now consider the following specific form of $\chi$ called the {\em (non-spherical) Gaussian wave packet}  (see, e.g., \citet{He1975a, He1976b}):
\begin{equation}
  \label{eq:chi}
  \chi(y;x) = \exp\braces{ \frac{i}{\hbar}\brackets{ \frac{1}{2}(x - q)^{T}\mathcal{C}(x - q) + p \cdot (x - q) + (\phi + i \delta) } },
\end{equation}
where $\mathcal{C} = \mathcal{A} + i\mathcal{B}$ is a $d \times d$ complex symmetric matrix with a positive-definite imaginary part, i.e., the matrix $\mathcal{C}$ is an element in the {\em Siegel upper half space}~\cite{Si1943} defined by
\begin{equation*}
  \Sigma_{d} \defeq 
  \setdef{ \mathcal{C} = \mathcal{A} + i\mathcal{B} \in \mathbb{C}^{d\times d} }{ \mathcal{A}, \mathcal{B} \in \text{Sym}_{d}(\R),\, \mathcal{B} > 0 },
\end{equation*}
where $\text{Sym}_{d}(\R)$ is the set of $d \times d$ real symmetric matrices, and $\mathcal{B} > 0$ means that $\mathcal{B}$ is positive-definite.
It is easy to see that the (real) dimension of $\Sigma_{d}$ is $d(d+1)$.

One may then let $\mathcal{M}$ be the $(d+1)(d+2)$-dimensional manifold
\begin{equation*}
 \mathcal{M} = T^{*}\R^{d} \times \Sigma_{d} \times \mathbb{S}^{1} \times \R,
\end{equation*}
and a typical element $y \in \mathcal{M}$ is written as follows:
\begin{equation*}
  y \defeq (q, p, \mathcal{A}, \mathcal{B}, \phi, \delta).
\end{equation*}
We then define an embedding of $\mathcal{M}$ to $\mathcal{H} \defeq L^{2}(\R^{d})$ by
\begin{equation*}
  \iota: \mathcal{M} \hookrightarrow \mathcal{H};
  \quad
  \iota(y) = \chi(y;\,\cdot\,)
\end{equation*}
with Eq.~\eqref{eq:chi}. Then, it is easy to show that the embedding $\iota: \mathcal{M} \hookrightarrow \mathcal{H}$ in fact satisfies condition \eqref{eq:AlmostComplexStructure} of Proposition~\ref{prop:embedding}, where the almost complex structure $J_{y}: T_{y}\mathcal{M} \to T_{y}\mathcal{M}$ is given by
\begin{multline*}
  J_{y}\parentheses{ \dot{q}, \dot{p}, \dot{\mathcal{A}}, \dot{\mathcal{B}}, \dot{\phi}, \dot{\delta} }
  \\
  = \parentheses{
    \mathcal{B}^{-1}(\mathcal{A}\dot{q} - \dot{p}),\,
    (\mathcal{A}\mathcal{B}^{-1}\mathcal{A} + \mathcal{B})\dot{q} - \mathcal{A}\mathcal{B}^{-1}\dot{p},\,
    -\dot{\mathcal{B}},\,
    \dot{\mathcal{A}},\,
    p^{T}\mathcal{B}^{-1}(\mathcal{A}\dot{q} - \dot{p}) - \dot{\delta},\,
    -p \cdot \dot{q} + \dot{\phi}
  },
\end{multline*}
and hence $\mathcal{M}$ is symplectic.

Note that the variable $\delta$ is essential in the symplectic formulation.
We have
\begin{equation}
  \label{eq:N}
  \mathcal{N}(\mathcal{B},\delta) \defeq \norm{\chi(y;\,\cdot\,)}^{2} = \sqrt{ \frac{(\pi\hbar)^{d}}{\det \mathcal{B}} }\, \exp\parentheses{ -\frac{2\delta}{\hbar} },
\end{equation}
and so we may eliminate $\delta$ by solving $\norm{\chi} = 1$ for $\delta$ and substituting it back into Eq.~\eqref{eq:chi} to normalize it.
However, {\em without $\delta$, the manifold $\mathcal{M}$ is odd-dimensional and hence cannot be symplectic.}
More specifically, {\em the variable $\delta$ plays the role of incorporating the phase variable $\phi$ into the symplectic setting.}
\begin{remark}
  \label{rem:normalization}
  As we shall see later, $\mathcal{N}(\mathcal{B},\delta) = \norm{\chi}^{2}$ is essentially the conserved quantity (momentum map) corresponding to a symmetry of the system (by Noether's theorem).
  Normalization is introduced as the restriction of $\chi$ to the level set $\norm{\chi} = 1$ of the conserved quantity, i.e., $\chi$ is normalized on the invariant submanifold of $\mathcal{M}$ defined by $\norm{\chi} = 1$.
  Furthermore, this setup naturally fits into the setting of symplectic reduction and reconstruction as we shall see in Sections~\ref{sec:Reduction} and \ref{sec:ReconstructionAndGeometricPhase}.
\end{remark}

\subsection{Symplectic Gaussian Wave Packet Dynamics}
We may now calculate the symplectic one-form $\Theta_{\mathcal{M}}$, Eq.~\eqref{eq:Theta_M-def}, explicitly as
\begin{equation}
  \label{eq:Theta_M}
  \Theta_{\mathcal{M}} \defeq \iota^{*}\Theta = \mathcal{N}(\mathcal{B},\delta) \parentheses{ p_{i}\,{\bf d}q^{i} - \frac{\hbar}{4}\tr(\mathcal{B}^{-1}{\bf d}\mathcal{A}) - {\bf d}\phi },
\end{equation}
and hence also the symplectic form on $\mathcal{M}$:
\begin{align}
  \label{eq:Omega_M}
  \Omega_{\mathcal{M}} &\defeq -{\bf d}\Theta_{\mathcal{M}}
  \nonumber\\
  &= \mathcal{N}(\mathcal{B},\delta) \biggl\{
  {\bf d}q^{i} \wedge {\bf d}p_{i} - \frac{p_{i}}{2} {\bf d}q^{i} \wedge \tr(\mathcal{B}^{-1}{\bf d}\mathcal{B}) - \frac{2p_{i}}{\hbar}{\bf d}q^{i} \wedge {\bf d}\delta
  \nonumber\\
  &\qquad\quad
  +\frac{\hbar}{8}(2\mathcal{B}^{-1}_{ik}\mathcal{B}^{-1}_{lj} + \mathcal{B}^{-1}_{ij}\mathcal{B}^{-1}_{lk})\d\mathcal{A}_{ij} \wedge \d\mathcal{B}_{kl}
  \nonumber\\
  &\qquad\quad
  + \frac{1}{2}\brackets{ \tr(\mathcal{B}^{-1}\d\mathcal{A}) \wedge \d\delta - \tr(\mathcal{B}^{-1}\d\mathcal{B}) \wedge \d\phi }
  + \frac{2}{\hbar} {\bf d}\phi \wedge {\bf d}\delta
  \biggr\}.
\end{align}
On the other hand, the Hamiltonian becomes
\begin{align}
  H &= \mathcal{N}(\mathcal{B},\delta) \braces{ \frac{p^{2}}{2m} + \frac{\hbar}{4m}\tr\brackets{ \mathcal{B}^{-1}(\mathcal{A}^{2} + \mathcal{B}^{2}) } } + \exval{V}(q, \mathcal{B}, \delta)
  \nonumber\\
  &= \mathcal{N}(\mathcal{B},\delta) \braces{
    \frac{p^{2}}{2m} + \frac{\hbar}{4m}\tr\brackets{ \mathcal{B}^{-1}(\mathcal{A}^{2} + \mathcal{B}^{2}) } + \overline{\exval{V}}(q, \mathcal{B})
  },
  \label{eq:H}
\end{align}
where $\exval{V}(q, \mathcal{B}, \delta)$ is the expectation value of the potential $V$ for the above wave function $\chi$, i.e.,
\begin{equation*}
  \exval{V}(q, \mathcal{B}, \delta)
  \defeq \exp\parentheses{ -\frac{2\delta}{\hbar} } \int_{\R^{d}} V(x) \exp\brackets{ -\frac{1}{\hbar} (x - q)^{T}\mathcal{B}(x - q) } dx
\end{equation*}
and $\overline{\exval{V}}(q, \mathcal{B})$ is a normalized version of it:
\begin{equation}
  \label{eq:avgV}
  \overline{\exval{V}}(q, \mathcal{B})
  \defeq \frac{\exval{V}(q, \mathcal{B}, \delta)}{\mathcal{N}(\mathcal{B},\delta)}
  = \sqrt{ \frac{\det \mathcal{B}}{(\pi\hbar)^{d}} } \int_{\R^{d}} V(x) \exp\brackets{ -\frac{1}{\hbar}(x - q)^{T}\mathcal{B}(x - q) } dx.
\end{equation}
In what follows, for any function $A(x)$ such that $\exval{A} < \infty$, we write
\begin{equation*}
  \overline{ \exval{A} } \defeq \frac{\exval{A}}{\mathcal{N}(\mathcal{B},\delta)}
  = \ip{ \frac{\chi}{\norm{\chi}} }{ A\,\frac{\chi}{\norm{\chi}} }.
\end{equation*}
Note that if $\chi$ is normalized, i.e., $\mathcal{N}(\mathcal{B},\delta) = \norm{\chi}^{2} = 1$, then $\overline{ \exval{A} } = \exval{A}$; in particular $\overline{\exval{V}} = \exval{V}$.

Now, the main result in this section is the following:
\begin{theorem}
  The Hamiltonian system ${\bf i}_{X_{H}} \Omega_{\mathcal{M}} = {\bf d}H$ with the above symplectic form~\eqref{eq:Omega_M} and Hamiltonian~\eqref{eq:H} gives the semiclassical equations (see also \citet[Section~II.4.1]{Lu2008}):
  \begin{equation}
    \label{eq:Heller}
    \begin{array}{c}
      \DS
      \dot{q} = \frac{p}{m},
      \qquad
      \dot{p} = -\overline{ \exval{\nabla{V}} },
      \qquad
      \dot{\mathcal{A}} = -\frac{1}{m}(\mathcal{A}^{2} - \mathcal{B}^{2}) - \overline{ \exval{\nabla^{2}V} },
      \qquad
      \dot{\mathcal{B}} = -\frac{1}{m}(\mathcal{A}\mathcal{B} + \mathcal{B}\mathcal{A}),
      \medskip\\
      \DS
      \dot{\phi} = \frac{p^{2}}{2m} - \overline{ \exval{V} }
      - \frac{\hbar}{2m} \tr\mathcal{B}
      + \frac{\hbar}{4} \tr\parentheses{ \mathcal{B}^{-1} \overline{ \exval{\nabla^{2}V} } },
      \qquad
      \dot{\delta} = \frac{\hbar}{2m} \tr\mathcal{A},
    \end{array}
  \end{equation}
  where $\nabla^{2}V$ is the $d \times d$ Hessian matrix, i.e.,
  \begin{equation*}
    (\nabla^{2}V)_{ij} = \pd{^{2}V}{x^{i}\partial x^{j}}.
  \end{equation*}
\end{theorem}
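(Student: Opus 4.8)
The plan is to verify the defining relation $\mathbf{i}_{X_H}\Omega_{\mathcal{M}} = \mathbf{d}H$ by a direct computation in the coordinates $(q,p,\mathcal{A},\mathcal{B},\phi,\delta)$. I would write the unknown Hamiltonian vector field as
\[
  X_H = \dot{q}^{i}\,\partial_{q^{i}} + \dot{p}_{i}\,\partial_{p_{i}} + \dot{\mathcal{A}}_{ij}\,\partial_{\mathcal{A}_{ij}} + \dot{\mathcal{B}}_{ij}\,\partial_{\mathcal{B}_{ij}} + \dot{\phi}\,\partial_{\phi} + \dot{\delta}\,\partial_{\delta},
\]
contract the explicit two-form \eqref{eq:Omega_M} with it, compute $\mathbf{d}H$ from \eqref{eq:H}, and match the coefficients of the basis one-forms $\mathbf{d}q^{i},\mathbf{d}p_{i},\mathbf{d}\mathcal{A}_{ij},\mathbf{d}\mathcal{B}_{ij},\mathbf{d}\phi,\mathbf{d}\delta$. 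Matching yields a linear system for the velocity components, whose solution I expect to reproduce \eqref{eq:Heller}.

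Before matching, I would exploit the common prefactor $\mathcal{N}(\mathcal{B},\delta)$ in \eqref{eq:N}. Writing $H = \mathcal{N}\,h$ and $\Omega_{\mathcal{M}} = \mathcal{N}\,\omega$, and using $\mathbf{d}\log\mathcal{N} = -\tfrac{1}{2}\tr(\mathcal{B}^{-1}\mathbf{d}\mathcal{B}) - \tfrac{2}{\hbar}\mathbf{d}\delta$, the factor $\mathcal{N}$ cancels and the defining equation collapses to
\[
  \mathbf{i}_{X_H}\omega = \mathbf{d}h - h\parentheses{ \tfrac{1}{2}\tr(\mathcal{B}^{-1}\mathbf{d}\mathcal{B}) + \tfrac{2}{\hbar}\mathbf{d}\delta }.
\]
Computing $\mathbf{d}h$ requires the matrix-calculus identities $\mathbf{d}(\mathcal{B}^{-1}) = -\mathcal{B}^{-1}(\mathbf{d}\mathcal{B})\mathcal{B}^{-1}$ and $\mathbf{d}\det\mathcal{B} = \det\mathcal{B}\,\tr(\mathcal{B}^{-1}\mathbf{d}\mathcal{B})$, together with differentiation of $\overline{\exval{V}}(q,\mathcal{B})$ in \eqref{eq:avgV} under the integral sign; the latter produces $\overline{\exval{\nabla V}}$ and $\overline{\exval{\nabla^{2}V}}$ through the first and second Gaussian moments of $(x-q)$. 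This cancellation removes $\mathcal{N}$ from the bookkeeping entirely, in line with Remark~\ref{rem:normalization} that $\mathcal{N}$ is the conserved momentum map for the $\phi$-translation symmetry.

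The coefficient matching then decomposes into blocks. The $\mathbf{d}p_{i}$ coefficient gives $\dot{q} = p/m$ immediately; substituting this, the $\mathbf{d}q^{i}$ coefficient gives $\dot{p} = -\overline{\exval{\nabla V}}$; and since $H$ is independent of $\phi$, the $\mathbf{d}\phi$ and $\mathbf{d}\delta$ coefficients yield the stated $\dot{\phi}$ and $\dot{\delta}$ once $\dot{\mathcal{A}},\dot{\mathcal{B}}$ are in hand. These sectors are coupled because $\omega$ is not in Darboux form: the cross terms $\mathbf{d}q^{i}\wedge\tr(\mathcal{B}^{-1}\mathbf{d}\mathcal{B})$, $\mathbf{d}q^{i}\wedge\mathbf{d}\delta$, $\tr(\mathcal{B}^{-1}\mathbf{d}\mathcal{A})\wedge\mathbf{d}\delta$, $\tr(\mathcal{B}^{-1}\mathbf{d}\mathcal{B})\wedge\mathbf{d}\phi$, and $\mathbf{d}\phi\wedge\mathbf{d}\delta$ entangle the position and phase variables, so these relations must be solved as a system rather than read off independently.

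I expect the main obstacle to be the $(\mathcal{A},\mathcal{B})$ block, governed by the rank-four coefficient $\tfrac{\hbar}{8}(2\mathcal{B}^{-1}_{ik}\mathcal{B}^{-1}_{lj} + \mathcal{B}^{-1}_{ij}\mathcal{B}^{-1}_{lk})$. Matching the $\mathbf{d}\mathcal{A}_{ij}$ and $\mathbf{d}\mathcal{B}_{kl}$ coefficients contracts $\dot{\mathcal{A}}$ and $\dot{\mathcal{B}}$ against this tensor, and one must invert it on the space of symmetric matrices to extract the velocities. Concretely, I would contract with $\mathcal{B}$ on the appropriate indices to undo the $\mathcal{B}^{-1}$ factors, impose the symmetry constraints $\dot{\mathcal{A}}^{T} = \dot{\mathcal{A}}$ and $\dot{\mathcal{B}}^{T} = \dot{\mathcal{B}}$, and rearrange the resulting traces to isolate $\dot{\mathcal{A}} = -\tfrac{1}{m}(\mathcal{A}^{2} - \mathcal{B}^{2}) - \overline{\exval{\nabla^{2}V}}$ and $\dot{\mathcal{B}} = -\tfrac{1}{m}(\mathcal{A}\mathcal{B} + \mathcal{B}\mathcal{A})$. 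The delicate point is keeping every $\mathcal{A}$–$\mathcal{B}$ product in the correct symmetric order, since $\mathcal{A}$ and $\mathcal{B}$ need not commute; this noncommutative bookkeeping, rather than any conceptual difficulty, is where the bulk of the effort lies. Back-substituting $\dot{\mathcal{A}}$ and $\dot{\mathcal{B}}$ into the $\mathbf{d}\phi$ and $\mathbf{d}\delta$ relations then gives the last two equations in \eqref{eq:Heller}, completing the verification.
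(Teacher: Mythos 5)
Your proposal is correct and is essentially the paper's own proof: a direct coordinate computation matching $\mathbf{i}_{X_H}\Omega_{\mathcal{M}}$ against $\mathbf{d}H$, whose key ingredient---just as in the paper---is converting the derivatives $\tpd{\overline{\exval{V}}}{q}$ and $\tpd{\overline{\exval{V}}}{\mathcal{B}_{ij}}$ into $\overline{\exval{\nabla V}}$ and $\overline{\exval{\nabla^{2}V}}$ via integration by parts. Your explicit factorization $H = \mathcal{N}h$, $\Omega_{\mathcal{M}} = \mathcal{N}\omega$ is only a cosmetic repackaging of the paper's computation, where the same $h\,\mathbf{d}\log\mathcal{N}$ contributions appear as the $\overline{H}$-proportional terms in the paper's expression for $\mathbf{d}H$.
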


\begin{proof}
  Calculation of ${\bf i}_{X_{H}} \Omega_{\mathcal{M}}$ is straightforward, whereas that of $\d{H}$ is somewhat tedious:
  Note first that the derivatives of the potential term $\overline{ \exval{V} }(q, \mathcal{B})$ are rewritten as follows using integration by parts:
  \begin{equation*}
    \pd{}{q} \overline{ \exval{V} } = \overline{\exval{\nabla{V}}},
    \qquad
    \pd{}{\mathcal{B}_{ij}} \overline{ \exval{V} } = -\frac{\hbar}{4} \parentheses{ \mathcal{B}^{-1} \overline{\exval{\nabla^{2}{V}}} \mathcal{B}^{-1} }_{ij}.
  \end{equation*}
  As a result, we have
  \begin{multline*}
    \d{H} = \mathcal{N}(q,\mathcal{B}) \biggl(
    \overline{\exval{\nabla{V}}} \cdot \d{q} + \frac{p}{m} \cdot \d{p}
    + \frac{\hbar}{4m} \tr\brackets{ (\mathcal{A}\mathcal{B}^{-1} + \mathcal{B}^{-1}\mathcal{A})\,\d\mathcal{A} }
    \\
    + \frac{\hbar}{4} \tr\braces{
      \brackets{
        \frac{1}{m} (I_{d} - \mathcal{B}^{-1}\mathcal{A}^{2}\mathcal{B}^{-1})
        - \frac{2}{\hbar}\,\overline{H} \mathcal{B}^{-1}
        - \mathcal{B}^{-1} \overline{\exval{\nabla^{2}{V}}} \mathcal{B}^{-1}
      } \d\mathcal{B}
    }
    - \frac{2}{\hbar}\,\overline{H}\,\d{\delta}
    \biggr),
  \end{multline*}
  where $I_{d}$ is the identity matrix of size $d$ and $\overline{H}$ is what later appears as the reduced Hamiltonian in Eq.~\eqref{eq:H-reduced}:
  \begin{equation*}
    \overline{H} \defeq \frac{p^{2}}{2m} + \frac{\hbar}{4m}\tr\brackets{ \mathcal{B}^{-1}(\mathcal{A}^{2} + \mathcal{B}^{2}) } + \overline{\exval{V}}(q, \mathcal{B}). \qedhere
  \end{equation*}
\end{proof}

\begin{remark}
  Writing $\mathcal{C} = \mathcal{A} + i\mathcal{B}$, the above equations for $\mathcal{A}$ and $\mathcal{B}$ are combined into the following single equation:
  \begin{equation*}
    \dot{\mathcal{C}} = -\frac{1}{m}\mathcal{C}^{2} - \overline{ \exval{\nabla^{2}V} }.
  \end{equation*}
\end{remark}

\begin{remark}
  Approximation of solutions of the Schr\"odinger equation~\eqref{eq:SchroedingerEq} by the Gaussian wave packet~\eqref{eq:chi} with the semiclassical equations~\eqref{eq:Heller} is usually valid for short-times.
  Specifically, \citet[Theorem~4.4]{Lu2008} estimates that the error $\norm{ \chi(y(t);x) - \psi(x,t) }$ is $O(t\sqrt{\hbar})$.
  See \citet{Ha1980} for a similar but more detailed result.
\end{remark}

\begin{remark}
  \label{rem:avgV}
  The original formulation of \citet{He1975a} (see also \citet{LeHe1982}) is {\em not} from a Hamiltonian/symplectic point of view and does not involve expectation values $\overline{\exval{V}}$ etc.
  The above equations seem to be originally derived in \citet{CoKa1990} by using the Dirac--Frenkel variational principle (see Remark~\ref{rem:Dirac-Frenkel}); its Hamiltonian structure for the reduced dynamics (see Theorem~\ref{thm:reduction}) in the one-dimensional case was discovered in \citet{PaSc1994} by finding Darboux coordinates (see Remark~\ref{rem:DarbouxCoords}) explicitly.
  Its connection with the symplectic structure for the full quantum dynamics is elucidated in \citet{FaLu2006} for the spherical Gaussian wave packets (see Section~\ref{sec:Spherical}) and for a general abstract case in \citet[Section~II.1]{Lu2008}, which is restated in Proposition~\ref{prop:embedding}.
\end{remark}

\subsection{Relationship with Alternative Approach using Time-Dependent Operators}
\label{ssec:Littlejohn}
There is an alternative approach, due to \citet[Section~7]{Li1986}, to deriving time-evolution equations for the Gaussian wave packet~\eqref{eq:chi}.
The key idea behind it is to describe the dynamics in terms of time-dependent {\em operators} acting on the initial state, as opposed to assuming, from the outset, a wave function containing time-dependent parameters as in \eqref{eq:chi}:
Let $\ket{\psi_{0}}$ be the initial state and suppose that the state at the time $t$, $\ket{\psi(t)}$, is given by
\begin{equation}
  \label{eq:LittlejohnWavePacket}
  \ket{\psi(t)} = e^{i\phi(t)/\hbar}\, T(q(t),p(t))\, M(S(t))\, T(q_{0},p_{0})^{*}\, \ket{\psi_{0}},
\end{equation}
where $T(\delta q, \delta p)$ is the Heisenberg operator corresponding to the translation $(q,p) \mapsto (q + \delta q, p + \delta p)$ in $T^{*}\R^{d}$ (see \citet[Section~3]{Li1986}); $S(t) \in Sp(2d,\R)$ and $M(S(t))$ is a corresponding metaplectic operator (see \citet[Section~4]{Li1986}); $q_{0}$ and $p_{0}$ are expectation values of the standard position and momentum operators for the initial state $\ket{\psi_{0}}$.

One finds a connection with the Gaussian wave packet~\eqref{eq:chi} by choosing the ground state of the harmonic oscillator as the initial state $\ket{\psi_{0}}$, i.e.,
\begin{equation*}
  \psi_{0}(x) \defeq \langle x\,|\,\psi_{0} \rangle = \frac{1}{(\pi\hbar)^{d/4}} \exp\parentheses{ -\frac{|x|^{2}}{2\hbar} }.
\end{equation*}
Then, one obtains the ``ground state'' of the wave packets of \citet{Ha1980, Ha1998} (see also \citet[Chapter~V]{Lu2008}):
\begin{align}
  \psi(x,t) &\defeq \langle x\,|\,\psi(t) \rangle
  \nonumber\\
  & = (\pi\hbar)^{-d/4} |\det Q|^{-1/2} \exp\braces{ \frac{i}{\hbar}\brackets{ \frac{1}{2}(x - q)^{T}P Q^{-1}(x - q) + p \cdot (x - q) + \phi } },
  \label{eq:HagedornWavePacket}
\end{align}
where the parameters $(q,p,Q,P,\phi)$ are time $t$ dependent, but this is suppressed for brevity; the $d \times d$ complex matrices $Q$ and $P$ are introduced by writing $S \in Sp(2d,\R)$ as
\begin{equation*}
  S = 
  \begin{bmatrix}
    \Re Q & \Im Q \smallskip\\
    \Re P & \Im P
  \end{bmatrix}.
\end{equation*}

It turns out that the above wave packet~\eqref{eq:HagedornWavePacket} is a normalized version of \eqref{eq:chi} (up to some difference in the phase $\phi$) if $S \in Sp(2d,\R)$ and $\mathcal{A} + i\mathcal{B} \in \Sigma_{d}$ are related by
\begin{equation*}
  \pi_{U(d)}(S) = P Q^{-1} = \mathcal{A} + i\mathcal{B} 
\end{equation*}
where $\pi_{U(d)}$ is the quotient map defined as
\begin{equation*}
  \pi_{U(d)}: Sp(2d,\R) \to \Sigma_{d};
  \quad
  \begin{bmatrix}
    A & B \\
    C & D
  \end{bmatrix}
  \mapsto
  (C + iD)(A + iB)^{-1},
\end{equation*}
which naturally arises by identifying $\Sigma_{d}$ as the homogeneous space $Sp(2d,\R)/U(d)$ (see \citet{Si1943}, \citet[Section~4.5]{Fo1989}, and \citet[Exercise~2.28 on p.~48]{McSa1999}).
We also note that \citet[Section~8.1]{Li1986} exploits the identification $\Sigma_{d} \cong Sp(2d,\R)/U(d)$ to parametrize Wigner functions of Gaussian wave packets.

\citet[Section~7]{Li1986} derives the dynamics for the parameters $(q,p,Q,P,\phi)$ by substituting \eqref{eq:LittlejohnWavePacket} into the Schr\"odinger equation \eqref{eq:SchroedingerEq} with its Hamiltonian operator being approximated by an operator that is quadratic in the standard position and momentum operators: More specifically, one first calculates the quadratic approximation of the Weyl symbol of the original Hamiltonian, and then obtains the corresponding operator by inverting the Weyl symbol relations.

The advantage of this approach is that one may choose an arbitrary initial state for $\ket{\psi_{0}}$ and hence is more general than assuming the Gaussian wave packet \eqref{eq:chi}.
However, {\em the resulting equations (see (7.25) of \cite{Li1986}) for $(q, p)$ are classical Hamilton's equations as in those of \citet{He1975a, He1976b}, whereas the second equation of \eqref{eq:Heller} has the potential term $\overline{ \exval{\nabla{V}} }(q,\mathcal{B})$, which generally depends on $\mathcal{B}$ and hence contains a quantum correction}.
The $\mathcal{B}$-dependence of the potential term is crucial for us because it allows the system to realize classically forbidden motions such as tunneling (see Section~\ref{sec:SemiclassicalTunneling}).

\section{Momentum Map, Normalization, and Symplectic Reduction}
\label{sec:Reduction}
The previous section showed that the symplectic structure for the semiclassical dynamics~\eqref{eq:Heller} is inherited from the one for the Schr\"odinger equation by pull-back via the inclusion $\iota: \mathcal{M} \to \mathcal{H}$.
In this section, we show that the semiclassical dynamics also inherits the phase symmetry and the corresponding momentum map from the (full) quantum dynamics, and thus we may perform symplectic reduction, as is done for the Schr\"odinger equation in \citet[Section~5A]{MaMoRa1990} and \citet[Section~6.3]{Ma1992}.

\subsection{Geometry of Quantum Mechanics}
Consider the $\mathbb{S}^{1}$-action $\Psi: \mathbb{S}^{1} \times \mathcal{H} \to \mathcal{H}$ on the Hilbert space $\mathcal{H} = L^{2}(\R^{d})$ defined by
\begin{equation*}
  \Psi_{\theta}: \mathcal{H} \to \mathcal{H};
  \quad
  \psi \mapsto e^{i \theta}\psi.
\end{equation*}
The corresponding momentum map ${\bf J}: \mathcal{H} \to \mathfrak{so}(2)^{*} \cong \R$, where we identified $\mathbb{S}^{1}$ with $SO(2)$, is given by~(see, e.g., \citet[Section~6.3]{Ma1992})
\begin{equation*}
  {\bf J}(\psi) = -\hbar\norm{\psi}^{2}.
\end{equation*}
The expectation value of the Hamiltonian $\texval{\hat{H}}$ is invariant under this action, and hence Noether's theorem implies that the norm $\norm{\psi}$ is conserved along the flow of the Schr\"odinger equation.
In particular, the level set at the value $-\hbar$ gives the unit sphere $\mathbb{S}(\mathcal{H})$ in the Hilbert space $\mathcal{H}$, i.e., the set of normalized wave functions:
\begin{equation*}
  {\bf J}^{-1}(-\hbar) \defeq \setdef{ \psi \in \mathcal{H} }{ \norm{\psi} = 1 } \eqdef \mathbb{S}(\mathcal{H}).
\end{equation*}
Since $\mathbb{S}^{1}$ is Abelian, the projective Hilbert space $\mathbb{P}(\mathcal{H}) = {\bf J}^{-1}(-\hbar)/\mathbb{S}^{1} = \mathbb{S}(\mathcal{H})/\mathbb{S}^{1}$ is the reduced space in Marsden--Weinstein reduction~\cite{MaWe1974} and hence is symplectic: Defining an inclusion $\hat{i}_{\hbar}$ and projection $\hat{\pi}_{\hbar}$ by
\begin{equation*}
  \hat{i}_{\hbar}: {\bf J}^{-1}(-\hbar) \hookrightarrow \mathcal{H},
  \qquad
  \hat{\pi}_{\hbar}: {\bf J}^{-1}(-\hbar) \to \mathbb{P}(\mathcal{H}),
\end{equation*}
we have the symplectic form $\overline{\Omega}$ on $\mathbb{P}(\mathcal{H})$ such that
\begin{equation*}
 \hat{\pi}_{\hbar}^{*} \overline{\Omega} = \hat{i}_{\hbar}^{*} \Omega.
\end{equation*}
We may then reduce the dynamics to $\mathbb{P}(\mathcal{H})$.
Note that the geometric phase (Aharonov--Anandan phase~\cite{AhAn1987}) arises naturally as a reconstruction phase, as shown in \citet[Section~5A]{MaMoRa1990} and \citet[Section~6.3]{Ma1992}.

\subsection{Geometry of Gaussian Wave Packet Dynamics}
\label{ssec:GeometryOfGWPDyn}
The geometry and dynamics in $\mathcal{M}$ inherit this setting as follows:
Define an $\mathbb{S}^{1}$-action $\Phi: \mathbb{S}^{1} \times \mathcal{M} \to \mathcal{M}$ on the manifold $\mathcal{M}$ by
\begin{equation*}
  \Phi_{\theta}: \mathcal{M} \to \mathcal{M};
  \quad
  (q, p, \mathcal{A}, \mathcal{B}, \phi, \delta) \mapsto (a, p, \mathcal{A}, \mathcal{B}, \phi + \hbar\,\theta, \delta).
\end{equation*}
Then, it is clear that the diagram below commutes, and hence $\Phi$ is the $\mathbb{S}^{1}$-action on $\mathcal{M}$ induced by the action $\Psi$ on $\mathcal{H}$.
\begin{equation*}
  \begin{tikzcd}
    \mathcal{M} \arrow[hook]{r}{\iota} \arrow{d}[swap]{\Phi_{\theta}} & \mathcal{H} \arrow{d}{\Psi_{\theta}}
    \\
    \mathcal{M} \arrow[hook]{r}{\iota} & \mathcal{H}
  \end{tikzcd}
\end{equation*}
The infinitesimal generator of the action with $\xi \in \mathfrak{so}(2) \cong \R$ is 
\begin{equation*}
  \xi_{\mathcal{M}}(y) \defeq \left. \od{}{\eps} \Phi_{\eps\xi}(y) \right|_{\eps=0}
  = \hbar\,\xi\,\pd{}{\phi}.
\end{equation*}
The corresponding momentum map ${\bf J}_{\!\mathcal{M}}: \mathcal{M} \to \mathfrak{so}(2)^{*} \cong \R$ is defined by the condition
\begin{equation*}
  \ip{ {\bf J}_{\!\mathcal{M}}(y) }{\xi} = \ip{ \Theta_{\mathcal{M}}(y) }{ \xi_{\mathcal{M}}(y) } = -\hbar\,\mathcal{N}(\mathcal{B},\delta)\,\xi,
\end{equation*}
for any $\xi \in \mathfrak{so}(2)$ and hence
\begin{equation*}
  {\bf J}_{\!\mathcal{M}}(y) = -\hbar\,\mathcal{N}(\mathcal{B},\delta).
\end{equation*}
Thus, we see that ${\bf J}_{\!\mathcal{M}} = {\bf J} \circ \iota$ or  ${\bf J}_{\!\mathcal{M}}(y) = {\bf J}(\chi(y))$.

Now, the Hamiltonian $H: \mathcal{M} \to \R$ is invariant under the action, and hence again by Noether's theorem, ${\bf J}_{\mathcal{M}}$ is conserved along the flow of $X_{H}$, i.e., each level set of ${\bf J}_{\mathcal{M}}$ is an invariant submanifold of the dynamics $X_{H}$.
In particular, on the level set
\begin{equation*}
  {\bf J}_{\!\mathcal{M}}^{-1}(-\hbar) \defeq \setdef{ y \in \mathcal{M} }{ {\bf J}_{\!\mathcal{M}}(y) = -\hbar },
\end{equation*}
we have $\mathcal{N}(\mathcal{B},\delta) = 1$ and thus, by Eq.~\eqref{eq:N}, the Gaussian wave packet function $\chi$ is normalized, i.e. $\norm{\chi} = 1$, and we may write
\begin{equation*}
  \chi|_{{\bf J}_{\!\mathcal{M}}^{-1}(-\hbar)}(x) = \parentheses{ \frac{\det\mathcal{B}}{(\pi\hbar)^{d}} }^{1/4} \exp\braces{ \frac{i}{\hbar}\brackets{ \frac{1}{2}(x - q)^{T}(\mathcal{A} + i\mathcal{B})(x - q) + p \cdot (x - q) + \phi } }
\end{equation*}
by eliminating the variable $\delta$ as alluded in Section~\ref{ssec:GaussianWPDyn}.
Ignoring the phase factor $e^{i\phi/\hbar}$ in the above expression corresponds to taking the equivalence class defined by the $\mathbb{S}^{1}$-action, and so the wave function
\begin{equation*}
  \parentheses{ \frac{\det\mathcal{B}}{(\pi\hbar)^{d}} }^{1/4} \exp\braces{ \frac{i}{\hbar}\brackets{ \frac{1}{2}(x - q)^{T}(\mathcal{A} + i\mathcal{B})(x - q) + p \cdot (x - q) } }
\end{equation*}
may be thought of as a representative for the equivalence class $[ \chi|_{{\bf J}_{\!\mathcal{M}}^{-1}(-\hbar)} ]$ in the projective Hilbert space $\mathbb{P}(\mathcal{H})$.

\begin{theorem}[Reduction of Gaussian wave packet dynamics]
  \label{thm:reduction}
  The semiclassical Hamiltonian system~\eqref{eq:Heller} on $\mathcal{M}$ is reduced by the above $\mathbb{S}^{1}$-symmetry to the Hamiltonian system
  \begin{equation}
    \label{eq:HamiltonianSystem-X_h}
    {\bf i}_{X_{\overline{H}}} \overline{\Omega}_{\hbar} = {\bf d}\overline{H}
  \end{equation}
  defined on
  \begin{equation*}
    \overline{\mathcal{M}}_{\hbar} \defeq {\bf J}_{\!\mathcal{M}}^{-1}(-\hbar)/\mathbb{S}^{1} = T^{*}\R^{d} \times \Sigma_{d},
  \end{equation*}
  with the reduced symplectic form
  \begin{equation}
    \label{eq:Omega-reduced}
    \overline{\Omega}_{\hbar} = {\bf d}q^{i} \wedge {\bf d}p_{i} + \frac{\hbar}{4} \mathcal{B}^{-1}_{ik} \mathcal{B}^{-1}_{lj} \d\mathcal{A}_{ij} \wedge \d\mathcal{B}_{kl}
  \end{equation}
  and the reduced Hamiltonian
  \begin{equation}
    \label{eq:H-reduced}
    \overline{H} = \frac{p^{2}}{2m} + \frac{\hbar}{4m}\tr\brackets{ \mathcal{B}^{-1}(\mathcal{A}^{2} + \mathcal{B}^{2}) } + \overline{\exval{V}}(q, \mathcal{B}).
  \end{equation}
  As a result, Eq.~\eqref{eq:HamiltonianSystem-X_h} gives the reduced set of the semiclassical equations:
  \begin{equation}
    \label{eq:Heller-reduced}
    \dot{q} = \frac{p}{m},
    \qquad
    \dot{p} = -\overline{\exval{ \nabla{V} }},
    \qquad
    \dot{\mathcal{A}} = -\frac{1}{m}(\mathcal{A}^{2} - \mathcal{B}^{2}) - \overline{\exval{ \nabla^{2}V }},
    \qquad
    \dot{\mathcal{B}} = -\frac{1}{m}(\mathcal{A}\mathcal{B} + \mathcal{B}\mathcal{A}).
  \end{equation}
\end{theorem}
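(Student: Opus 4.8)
The plan is to apply the Marsden--Weinstein reduction theorem~\cite{MaWe1974} to the Hamiltonian system ${\bf i}_{X_{H}}\Omega_{\mathcal{M}} = {\bf d}H$ with respect to the $\mathbb{S}^{1}$-symmetry $\Phi$ and the momentum map ${\bf J}_{\!\mathcal{M}}(y) = -\hbar\,\mathcal{N}(\mathcal{B},\delta)$ established above. First I would verify the hypotheses: the action $\Phi$ is symplectic because $\Theta_{\mathcal{M}}$ in \eqref{eq:Theta_M} depends on $\phi$ only through ${\bf d}\phi$ and on no variable through $\phi$ itself, so $\Phi_{\theta}^{*}\Theta_{\mathcal{M}} = \Theta_{\mathcal{M}}$ and hence $\Phi_{\theta}^{*}\Omega_{\mathcal{M}} = \Omega_{\mathcal{M}}$; the action is free and proper since $\mathbb{S}^{1}$ is compact and its infinitesimal generator $\hbar\,\partial/\partial\phi$ is nowhere vanishing; and $-\hbar$ is a regular value of ${\bf J}_{\!\mathcal{M}}$ because ${\bf d}{\bf J}_{\!\mathcal{M}} = -\hbar\,{\bf d}\mathcal{N}$ has a nonzero ${\bf d}\delta$-component by \eqref{eq:N}. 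Granting these, $\overline{\mathcal{M}}_{\hbar} = {\bf J}_{\!\mathcal{M}}^{-1}(-\hbar)/\mathbb{S}^{1}$ is symplectic and carries a unique reduced form $\overline{\Omega}_{\hbar}$ characterized by $\pi_{\hbar}^{*}\overline{\Omega}_{\hbar} = i_{\hbar}^{*}\Omega_{\mathcal{M}}$, where $i_{\hbar}: {\bf J}_{\!\mathcal{M}}^{-1}(-\hbar)\hookrightarrow\mathcal{M}$ and $\pi_{\hbar}: {\bf J}_{\!\mathcal{M}}^{-1}(-\hbar)\to\overline{\mathcal{M}}_{\hbar}$ are the inclusion and projection.

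Next I would identify the reduced space and compute $\overline{\Omega}_{\hbar}$ explicitly. On ${\bf J}_{\!\mathcal{M}}^{-1}(-\hbar)$ one has $\mathcal{N}(\mathcal{B},\delta) = 1$, which by \eqref{eq:N} solves uniquely for $\delta$ as a function of $\mathcal{B}$; thus the level set is a graph over $(q,p,\mathcal{A},\mathcal{B},\phi)$, and quotienting out $\phi$ yields $\overline{\mathcal{M}}_{\hbar} = T^{*}\R^{d}\times\Sigma_{d}$ with coordinates $(q,p,\mathcal{A},\mathcal{B})$. Differentiating $\mathcal{N} = 1$ gives the constraint ${\bf d}\delta = -\frac{\hbar}{4}\tr(\mathcal{B}^{-1}{\bf d}\mathcal{B})$ along the level set. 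Substituting $\mathcal{N} = 1$ and this relation into \eqref{eq:Omega_M} produces $i_{\hbar}^{*}\Omega_{\mathcal{M}}$: the $-\frac{p_i}{2}{\bf d}q^{i}\wedge\tr(\mathcal{B}^{-1}{\bf d}\mathcal{B})$ term is cancelled by $-\frac{2p_i}{\hbar}{\bf d}q^{i}\wedge{\bf d}\delta$; the two ${\bf d}\phi$-terms cancel against $\frac{2}{\hbar}{\bf d}\phi\wedge{\bf d}\delta$ (confirming that $i_{\hbar}^{*}\Omega_{\mathcal{M}}$ is horizontal, hence basic, and therefore descends); and the term $\frac{1}{2}\tr(\mathcal{B}^{-1}{\bf d}\mathcal{A})\wedge{\bf d}\delta$ exactly cancels the $\mathcal{B}^{-1}_{ij}\mathcal{B}^{-1}_{lk}$ contribution of the $\d\mathcal{A}\wedge\d\mathcal{B}$ term, leaving $\frac{\hbar}{4}\mathcal{B}^{-1}_{ik}\mathcal{B}^{-1}_{lj}\d\mathcal{A}_{ij}\wedge\d\mathcal{B}_{kl}$. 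Together with ${\bf d}q^{i}\wedge{\bf d}p_{i}$ this is precisely \eqref{eq:Omega-reduced}.

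For the reduced Hamiltonian, restricting \eqref{eq:H} to $\mathcal{N} = 1$ collapses $H$ to the expression \eqref{eq:H-reduced}; since $H$ is $\mathbb{S}^{1}$-invariant it descends to a well-defined function on $\overline{\mathcal{M}}_{\hbar}$, namely $\overline{H}$. The reduction theorem then guarantees that $X_{H}$ is $\pi_{\hbar}$-related to the reduced Hamiltonian vector field $X_{\overline{H}}$ determined by ${\bf i}_{X_{\overline{H}}}\overline{\Omega}_{\hbar} = {\bf d}\overline{H}$, so integral curves of $X_{H}$ project to those of $X_{\overline{H}}$. Finally, since the right-hand sides of the $\dot{q},\dot{p},\dot{\mathcal{A}},\dot{\mathcal{B}}$ equations in \eqref{eq:Heller} involve only $(q,p,\mathcal{A},\mathcal{B})$ and neither $\phi$ nor $\delta$, these four equations descend verbatim to \eqref{eq:Heller-reduced}; alternatively one may verify ${\bf i}_{X_{\overline{H}}}\overline{\Omega}_{\hbar} = {\bf d}\overline{H}$ directly on $\overline{\mathcal{M}}_{\hbar}$, as in the proof of the preceding theorem but now with the simpler form \eqref{eq:Omega-reduced}.

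I expect the main obstacle to be the symplectic-form computation of the second paragraph: one must check simultaneously that every ${\bf d}\phi$-term cancels, so that $i_{\hbar}^{*}\Omega_{\mathcal{M}}$ is basic and $\overline{\Omega}_{\hbar}$ is genuinely well defined on the quotient, and that the surviving $\d\mathcal{A}\wedge\d\mathcal{B}$ coefficients collapse from the symmetrized combination in \eqref{eq:Omega_M} down to the single term in \eqref{eq:Omega-reduced}. Both hinge on correctly imposing the constraint ${\bf d}\delta = -\frac{\hbar}{4}\tr(\mathcal{B}^{-1}{\bf d}\mathcal{B})$ when forming the pull-back $i_{\hbar}^{*}$, which is the one subtle point of the argument.
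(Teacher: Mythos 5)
Your proposal is correct and takes essentially the same route as the paper's proof: a direct application of Marsden--Weinstein reduction, identification of the level set $\mathcal{N}(\mathcal{B},\delta)=1$ and its quotient $T^{*}\R^{d}\times\Sigma_{d}$, restriction of the $\mathbb{S}^{1}$-invariant Hamiltonian, and computation of $\overline{\Omega}_{\hbar}$ from the defining relation $\pi_{\hbar}^{*}\overline{\Omega}_{\hbar} = i_{\hbar}^{*}\Omega_{\mathcal{M}}$. Your second paragraph---imposing ${\bf d}\delta = -\frac{\hbar}{4}\tr(\mathcal{B}^{-1}{\bf d}\mathcal{B})$ on the level set and tracking the resulting cancellations---correctly supplies the details that the paper compresses into the phrase ``coordinate calculations.''
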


A few remarks are in order before the proof:
\begin{remark}
  Note that the reduced symplectic form $\overline{\Omega}_{\hbar}$ is much simpler than the original one $\Omega_{\mathcal{M}}$ in Eq.~\eqref{eq:Omega_M}; it consists of the canonical symplectic form of classical mechanics and a ``quantum'' term proportional to $\hbar$.
  The quantum term is in fact essentially the imaginary part of the Hermitian metric
  \begin{equation*}
    g_{\Sigma_{d}} \defeq \tr\parentheses{ \mathcal{B}^{-1} \d\mathcal{C}\,\mathcal{B}^{-1} \d\bar{\mathcal{C}}\, }
    = \mathcal{B}^{-1}_{ik} \mathcal{B}^{-1}_{lj} \d\mathcal{C}_{kl} \otimes \d\bar{\mathcal{C}}_{ij}
  \end{equation*}
  on the Siegel upper half space $\Sigma_{d}$~\cite{Si1943}, i.e.,
  \begin{equation*}
    \Im g_{\Sigma_{d}} = -\mathcal{B}^{-1}_{ik} \mathcal{B}^{-1}_{lj} \d\mathcal{A}_{ij} \wedge \d\mathcal{B}_{kl},
  \end{equation*}
  and this gives a symplectic structure on the Siegel upper half space $\Sigma_{d}$.
\end{remark}

\begin{remark}
  Again, we may replace the last two equations of \eqref{eq:Heller-reduced} by the succinct form
  \begin{equation*}
    \dot{\mathcal{C}} = -\frac{1}{m}\mathcal{C}^{2} - \overline{\exval{ \nabla^{2}V }}
  \end{equation*}
  with $\mathcal{C} = \mathcal{A} + i\mathcal{B}$.
\end{remark}

\begin{proof}[Proof of Theorem~\ref{thm:reduction}]
  A simple application of Marsden--Weinstein reduction~\cite{MaWe1974} (see also \citet[Sections~1.1 and 1.2]{MaMiOrPeRa2007}).
  In fact, all the geometric ingredients necessary for the reduction are inherited from the (full) quantum dynamics as follows:
  Define the inclusion
  \begin{equation*}
    i_{\hbar}: {\bf J}_{\!\mathcal{M}}^{-1}(-\hbar) \hookrightarrow \mathcal{M},
  \end{equation*}
  the quotient map
  \begin{equation*}
    \pi_{\hbar}: {\bf J}_{\!\mathcal{M}}^{-1}(-\hbar) \to {\bf J}_{\!\mathcal{M}}^{-1}(-\hbar)/\mathbb{S}^{1} \eqdef \overline{\mathcal{M}}_{\hbar},
  \end{equation*}
  and also another inclusion
  \begin{equation*}
    [\iota]: \overline{\mathcal{M}}_{\hbar} \to \mathbb{P}(\mathcal{H});
    \quad
    [y] \mapsto [\chi(y)],
  \end{equation*}
  where $[\,\cdot\,]$ stands for the equivalence classes defined by the $\mathbb{S}^{1}$-actions $\Psi$ and $\Phi$.
  Then, the diagram below commutes and shows how the geometric structures are pulled back to the semiclassical side.
  \begin{equation*}
    \begin{tikzcd}[column sep=8.5ex, row sep=6ex]
      \mathcal{M} \arrow[hook]{r}{\iota} & \mathcal{H}
      \\
      {\bf J}_{\!\mathcal{M}}^{-1}(-\hbar) \arrow[hook]{r}[swap]{\iota|_{{\bf J}_{\!\mathcal{M}}^{-1}(-\hbar)}} \arrow{u}{i_{\hbar}} \arrow{d}[swap]{\pi_{\hbar}} & {\bf J}^{-1}(-\hbar) \arrow{u}[swap]{\hat{i}_{\hbar}} \arrow{d}{\hat{\pi}_{\hbar}}
      \\
      \overline{\mathcal{M}}_{\hbar} \arrow[hook]{r}[swap]{[\iota]} & \mathbb{P}(\mathcal{H})
    \end{tikzcd}
  \end{equation*}
  Figure~\ref{fig:Geometry} gives a schematic of the inheritance.
  \begin{figure}[htbp]
    \centering
    \includegraphics[width=.75\linewidth]{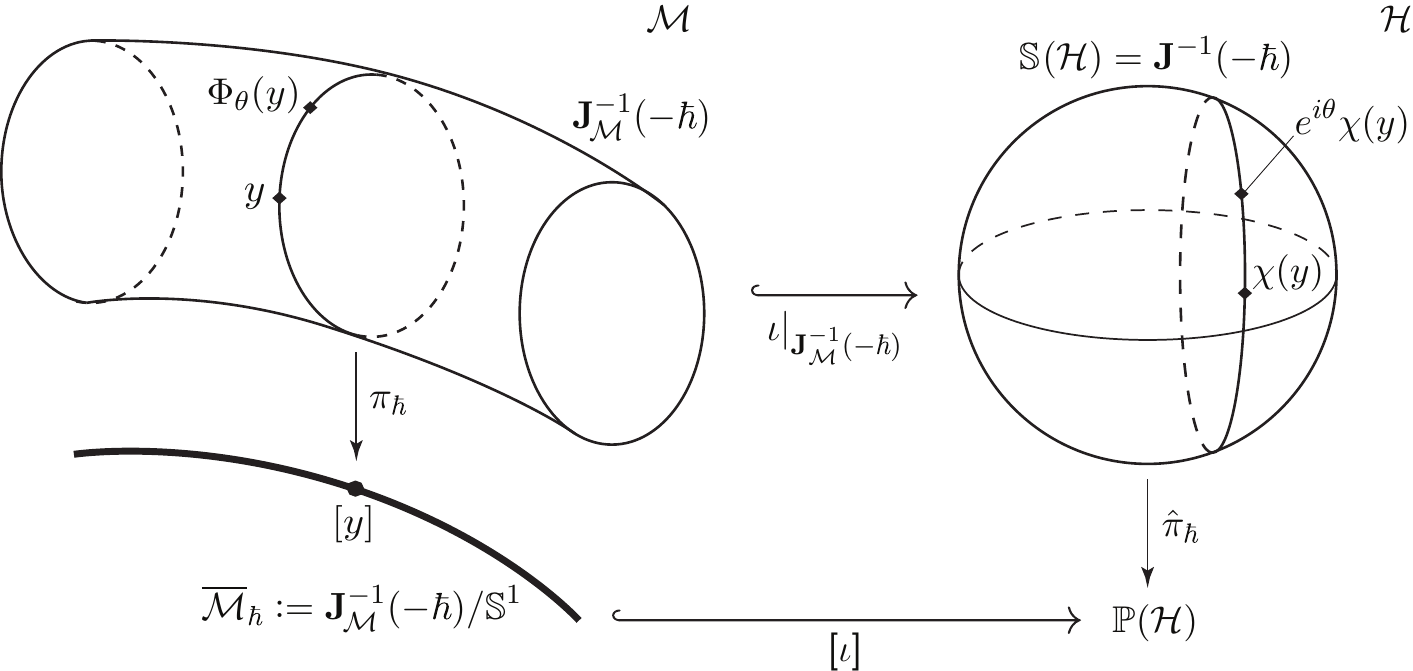}
    \caption{Geometry of Gaussian wave packet dynamics: The geometric structures necessary for symplectic reduction of semiclassical dynamics on $\mathcal{M}$ are inherited from the full quantum dynamics in $\mathcal{H}$ as pull-backs by inclusions.}
    \label{fig:Geometry}
  \end{figure}

  The level set ${\bf J}_{\!\mathcal{M}}^{-1}(-\hbar)$ is defined by $\mathcal{N}(\mathcal{B},\delta) = 1$, and so one may eliminate $\delta$ (see Eq.~\eqref{eq:N}) to write
  \begin{equation*}
    {\bf J}_{\!\mathcal{M}}^{-1}(-\hbar) = T^{*}\R^{d} \times \Sigma_{d} \times \mathbb{S}^{1}  = \{ (q, p, \mathcal{A}, \mathcal{B}, \phi) \},
  \end{equation*}
  and therefore the Marsden--Weinstein quotient is given by
  \begin{equation*}
    \overline{\mathcal{M}}_{\hbar} \defeq {\bf J}_{\!\mathcal{M}}^{-1}(-\hbar)/\mathbb{S}^{1} = T^{*}\R^{d} \times \Sigma_{d} = \{ (q, p, \mathcal{A}, \mathcal{B}) \}.
  \end{equation*}
  Then, the reduced symplectic form \eqref{eq:Omega-reduced} follows from coordinate calculations using its defining relation
  \begin{equation*}
    \pi_{\hbar}^{*}\, \overline{\Omega}_{\hbar} = i_{\hbar}^{*}\, \Omega_{\mathcal{M}}.
  \end{equation*}

  We also have the reduced Hamiltonian $\overline{H}: \overline{\mathcal{M}}_{\hbar} \to \R$, which appeared earlier in Eq.~\eqref{eq:H-reduced}, uniquely defined by
  \begin{equation*}
    \overline{H} \circ \pi_{\hbar} = H|_{{\bf J}_{\!\mathcal{M}}^{-1}(-\hbar)}
  \end{equation*}
  due to the $\mathbb{S}^{1}$-invariance of the original Hamiltonian $H$.

  Then, the Hamiltonian dynamics $\i_{X_{H}}\Omega_{\mathcal{M}} = \d{H}$ on $\mathcal{M}$ is reduced to the Hamiltonian dynamics ${\bf i}_{X_{\overline{H}}} \overline{\Omega}_{\hbar} = {\bf d}\overline{H}$ on the reduced space $\overline{\mathcal{M}}_{\hbar}$.
\end{proof}

\section{Spherical Gaussian Wave Packet Dynamics}
\label{sec:Spherical}
This section is a brief detour into a simple special case of Gaussian wave packet dynamics that assumes that the wave packet is ``spherical'', i.e., $\mathcal{A} = a I_{d}$ and $\mathcal{B} = b I_{d}$ with $I_{d}$ being the identity matrix of size $d$; hence we replace the Siegel upper half space $\Sigma_{d}$ by $\Sigma_{1}$ even if $d \neq 1$.
We also introduce the Darboux coordinates for the resulting semiclassical dynamics; they will be later exploited in the harmonic oscillator example in Section~\ref{sec:SemiclassicalHO} to find the action--angle coordinates.

\subsection{Spherical Gaussian Wave Packet Dynamics}
Setting $\mathcal{A} = a I_{d}$ and $\mathcal{B} = b I_{d}$ in Eq.~\eqref{eq:chi} gives the ``spherical'' Gaussian wave packet, i.e., 
\begin{equation*}
  \chi(y;x) = \exp\braces{ \frac{i}{\hbar}\brackets{ \frac{1}{2}(a + ib)|x - q|^{2} + p \cdot (x - q) + (\phi + i \delta) } }.
\end{equation*}
The manifold $\mathcal{M}$ is now
\begin{equation*}
 \mathcal{M} = T^{*}\R^{d} \times \Sigma_{1} \times \mathbb{S}^{1} \times \R.
\end{equation*}
Note that the Siegel upper half space $\Sigma_{1} \cong \setdef{ a + ib \in \C }{ b > 0 }$ is literally the upper half space of $\C$.
The manifold $\mathcal{M}$ is $(2d + 4)$-dimensional, and is parametrized by 
\begin{equation*}
  y \defeq (q, p, a, b, \phi, \delta).
\end{equation*}
The symplectic one-form $\Theta_{\mathcal{M}}$, Eq.~\eqref{eq:Theta_M}, now becomes
\begin{equation*}
  \Theta_{\mathcal{M}} \defeq \iota^{*}\Theta = \mathcal{N}(b,\delta)\parentheses{
    p_{i}\,{\bf d}q^{i}
    - \frac{d\hbar}{4b}\,{\bf d}a
    - {\bf d}\phi
  }
\end{equation*}
with
\begin{equation*}
  \mathcal{N}(b,\delta) \defeq \parentheses{ \frac{\pi\hbar}{b} }^{d/2}\, \exp\parentheses{ -\frac{2\delta}{\hbar} },
\end{equation*}
and hence the symplectic form $\Omega_{\mathcal{M}}$ on $\mathcal{M}$ is
\begin{multline*}
  \Omega_{\mathcal{M}} = \mathcal{N}(b,\delta)\biggl[
  {\bf d}q^{i} \wedge {\bf d}p_{i} - \frac{d\,p_{i}}{2b} {\bf d}q^{i} \wedge {\bf d}b
  - \frac{2p_{i}}{\hbar}{\bf d}q^{i} \wedge {\bf d}\delta
  \\
  + \frac{d(d+2)\hbar}{8b^{2}} {\bf d}a \wedge {\bf d}b
  + \frac{d}{2b} \parentheses{
    {\bf d}a \wedge {\bf d}\delta - {\bf d}b \wedge {\bf d}\phi
  }
  + \frac{2}{\hbar} {\bf d}\phi \wedge {\bf d}\delta
  \biggr],
\end{multline*}
which is given by \citet{FaLu2006} (see also \citet[Section~II.4]{Lu2008}).

On the other hand, the Hamiltonian $H: \mathcal{M} \to \R$, Eq.~\eqref{eq:H-def}, is given by
\begin{align}
  H &= \mathcal{N}(b,\delta) \brackets{ \frac{1}{2m} \parentheses{ p^{2} + d\hbar\,\frac{a^{2} + b^{2}}{2b} } } + \exval{V}(q, b, \delta)
  \nonumber\\
  &= \mathcal{N}(b,\delta) \brackets{ \frac{1}{2m} \parentheses{ p^{2} + d\hbar\,\frac{a^{2} + b^{2}}{2b} } + \overline{\exval{V}}(q, b) },
  \label{eq:H-spherical}
\end{align}
where
\begin{equation*}
  \exval{V}(q, b, \delta) \defeq \ip{\chi}{V\chi}
  = \exp\parentheses{ -\frac{2\delta}{\hbar} } \int_{\R^{d}} V(x) \exp\parentheses{ -\frac{b}{\hbar}|x - q|^{2} } dx,
\end{equation*}
and
\begin{equation}
  \label{eq:avgV-spherical}
  \overline{\exval{V}}(q, b)
  \defeq \frac{\exval{V}(q, b, \delta)}{\mathcal{N}(b,\delta)}
  = \parentheses{\frac{b}{\pi\hbar}}^{d/2} \int_{\R^{d}} V(x) \exp\parentheses{ -\frac{b}{\hbar}|x - q|^{2} } dx.
\end{equation}
Hence, as shown in \cite{FaLu2006}, the Hamiltonian system~\eqref{eq:HamiltonianSystem-X_H}, i.e.,
\begin{equation*}
  {\bf i}_{X_{H}} \Omega_{\mathcal{M}} = {\bf d}H
\end{equation*}
with
\begin{equation*}
  X_{H} = \dot{q}^{i} \pd{}{q^{i}} + \dot{p}_{i} \pd{}{p_{i}} + \dot{a} \pd{}{a} + \dot{b} \pd{}{b} + \dot{\phi} \pd{}{\phi} + \dot{\delta} \pd{}{\delta}
\end{equation*}
gives the spherical version of the equations of \citet{He1975a}:
\begin{equation}
  \label{eq:Heller-spherical}
  \begin{array}{c}
    \DS
    \dot{q} = \frac{p}{m},
    \qquad
    \dot{p} = -\overline{\exval{ \nabla{V} }},
    \qquad
    \dot{a} = -\frac{a^{2} - b^{2}}{m} - \frac{1}{d} \overline{\exval{ \Delta V }},
    \qquad
    \dot{b} = -\frac{2 a b}{m},
    \medskip\\
    \DS
    \dot{\phi} = \frac{p^{2}}{2m} - \overline{\exval{V}} - \frac{d\hbar}{2m}\,b + \frac{\hbar}{4b} \overline{\exval{ \Delta V }},
    \qquad
    \dot{\delta} = \frac{d \hbar}{2m}\,a.
  \end{array}
\end{equation}

We may apply the symplectic reduction in Theorem~\ref{thm:reduction} to obtain the following reduced symplectic form on $\overline{\mathcal{M}}_{\hbar}$:
\begin{equation*}
  \overline{\Omega}_{\hbar} = {\bf d}q^{i} \wedge {\bf d}p_{i} + \frac{d\hbar}{4b^{2}} {\bf d}a \wedge {\bf d}b.
\end{equation*}
The reduced Hamiltonian~\eqref{eq:H-reduced} is now
\begin{equation*}
  \overline{H} = \frac{p^{2}}{2m} + d\hbar\,\frac{a^{2} + b^{2}}{4m\,b} + \overline{\exval{V}}(q, b),
\end{equation*}
and the reduced equations~\eqref{eq:Heller-reduced} become
\begin{equation}
  \label{eq:Heller-spherical}
  \begin{array}{c}
    \DS
    \dot{q} = \frac{p}{m},
    \qquad
    \dot{p} = -\overline{\exval{ \nabla{V} }},
    \qquad
    \dot{a} = -\frac{a^{2} - b^{2}}{m} - \frac{1}{d} \overline{\exval{ \Delta V }},
    \qquad
    \dot{b} = -\frac{2 a b}{m}.
  \end{array}
\end{equation}

\subsection{Darboux Coordinates}
\label{ssec:DarbouxCoords}
Let us define the new coordinate system
\begin{equation}
  \label{eq:DarbouxCoords}
  (q^{i}, r, \varphi, \tilde{p}_{i}, p_{r}, p_{\varphi}) \defeq \parentheses{ q^{i},\, \frac{\sqrt{d}\,\hbar}{2}\frac{\mathcal{N}(b,\delta)}{b},\, -\phi,\, \mathcal{N}(b,\delta)\,p,\, \frac{\sqrt{d}}{2}\,a,\, \mathcal{N}(b,\delta) }.
\end{equation}
Then, the symplectic form $\Omega_{\mathcal{M}}$ takes the canonical form
\begin{equation*}
  \Omega_{\mathcal{M}} = {\bf d}q^{i} \wedge {\bf d}\tilde{p}_{i} + {\bf d}r \wedge {\bf d}p_{r} + {\bf d}\varphi \wedge {\bf d}p_{\varphi},
\end{equation*}
and thus the above coordinates are the Darboux coordinates.
Hence, the Hamiltonian system~\eqref{eq:Heller-spherical} is transformed to the following canonical form:
\begin{equation*}
  \begin{array}{c}
    \DS
    \dot{q}^{i} = \pd{H}{p_{i}},
    \qquad
    \dot{r} = \pd{H}{p_{r}},
    \qquad
    \dot{\varphi} = \pd{H}{p_{\varphi}},
    \medskip\\
    \DS
    \dot{\tilde{p}}_{i} = -\pd{H}{q^{i}},
    \qquad
    \dot{p}_{r} = -\pd{H}{r},
    \qquad
    \dot{p}_{\varphi} = -\pd{H}{\varphi}.
  \end{array}
\end{equation*}

The Darboux coordinates~\eqref{eq:DarbouxCoords} for $\mathcal{M}$ induce those for $\overline{\mathcal{M}}_{\hbar}$ as follows:
On the level set ${\bf J}_{\!\mathcal{M}}^{-1}(-\hbar)$, we have $\mathcal{N}(b,\delta) = 1$ and thus $\tilde{p} = p$; therefore we have the Darboux coordinates $(q^{i}, p_{i}, r, p_{r})$ for $\overline{\mathcal{M}}_{\hbar}$ with
\begin{equation*}
  (r, p_{r}) = \frac{\sqrt{d}}{2} \parentheses{ \frac{\hbar}{b},\, a },
\end{equation*}
that is, the reduced symplectic form $\overline{\Omega}_{\hbar}$ takes the canonical form
\begin{equation}
  \label{eq:omega-canonical}
  \overline{\Omega}_{\hbar} = {\bf d}q^{i} \wedge {\bf d}p_{i} + {\bf d}r \wedge {\bf d}p_{r}.
\end{equation}
Therefore, the reduced dynamics~\eqref{eq:HamiltonianSystem-X_h} with
\begin{equation*}
  X_{\overline{H}} = \dot{q}^{i} \pd{}{q^{i}} + \dot{p}_{i} \pd{}{p_{i}} + \dot{r} \pd{}{r} + \dot{p}_{r} \pd{}{p_{r}}
\end{equation*}
is written as canonical Hamilton's equations:
\begin{equation*}
  \dot{q}^{i} = \pd{\overline{H}}{p_{i}},
  \qquad
  \dot{r} = \pd{\overline{H}}{p_{r}},
  \qquad
  \dot{p}_{i} = -\pd{\overline{H}}{q^{i}},
  \qquad
  \dot{p}_{r} = -\pd{\overline{H}}{r}.
\end{equation*}

\begin{remark}
  \label{rem:DarbouxCoords}
  That the variables $b^{-1}$ and $a$ are essentially canonically conjugate was pointed out by \citet{Li1988} and \citet{SiSuMu1988}.
  See also \citet{BrLaVa1989} and \citet{PaSc1994}.
\end{remark}

\section{Reconstruction---Dynamic and Geometric Phases}
\label{sec:ReconstructionAndGeometricPhase}
\subsection{Theory of Reconstruction}
As described in Section~\ref{ssec:GeometryOfGWPDyn}, the Gaussian wave packet dynamics $X_{H}$ defined by \eqref{eq:HamiltonianSystem-X_H} in $\mathcal{M}$ may be reduced to the Hamiltonian dynamics $X_{\overline{H}}$ defined by \eqref{eq:HamiltonianSystem-X_h} in the reduced symplectic manifold $\overline{\mathcal{M}}_{\hbar} \defeq {\bf J}_{\mathcal{M}}^{-1}(-\hbar)/\mathbb{S}^{1}$.
Now let $\bar{c}(t)$ be an integral curve of the reduced dynamics $X_{\overline{H}}$, i.e., $\dot{\bar{c}}(t) = X_{\overline{H}}(\bar{c}(t))$.
Then, the curve $\bar{c}(t)$ is the projection of an integral curve $c(t)$ of the full dynamics $X_{H}$ on ${\bf J}_{\mathcal{M}}^{-1}(-\hbar)$, i.e., $\pi_{\hbar} \circ c(t) = \bar{c}(t)$.
Then, a natural question to ask is: {\em Given the reduced dynamics $\bar{c}(t)$, is it possible to construct the full dynamics $c(t)$?}
The theory of {\em reconstruction}~(\citet{MaMoRa1990}; see also~\citet[Chapter~6]{Ma1992}) provides an answer to the question, and the so-called dynamic and geometric phases arise naturally when reconstructing the full dynamics from the geometric point of view.

\subsection{Dynamic Phase}
For the full quantum dynamics with the Schr\"odinger equation, we may define a principal connection form $\mathscr{A}: T{\bf J}^{-1}(-\hbar) \to \mathfrak{so}(2)$ on the principal bundle ${\bf J}^{-1}(-\hbar) \to \mathbb{P}(\mathcal{H})$ as follows~(see \citet{Si1983} and \citet[Section~13.1]{Mo2002}):
\begin{equation*}
  \mathscr{A}(\psi) = \Im\ip{\psi}{{\bf d}\psi}|_{T_{\psi}{\bf J}^{-1}(-\hbar)};
  \quad
  \ip{\mathscr{A}(\psi)}{v_{\psi}} = \Im\ip{\psi}{v_{\psi}} \ \text{for}\ v_{\psi} \in T_{\psi}{\bf J}^{-1}(-\hbar),
\end{equation*}
that is, $\mathscr{A}$ is $-\Theta/\hbar$ restricted to ${\bf J}^{-1}(-\hbar)$.
Since $\norm{\psi}^{2} = \ip{\psi}{\psi} = 1$ for $\psi \in {\bf J}^{-1}(-\hbar)$, we have $\ip{{\bf d}\psi}{\psi} + \ip{\psi}{{\bf d}\psi} = 0$, and thus
\begin{equation*}
  \mathscr{A}(\psi) = -i\ip{\psi}{{\bf d}\psi}.
\end{equation*}
This induces the principal connection form $\mathscr{A}_{\mathcal{M}}: T{\bf J}_{\mathcal{M}}^{-1}(-\hbar) \to \mathfrak{so}(2)$ on the principal bundle ${\bf J}_{\mathcal{M}}^{-1}(-\hbar) \to \overline{\mathcal{M}}_{\hbar}$ that is given as follows:
\begin{equation}
  \label{eq:mathcalA_M}
  \mathscr{A}_{\mathcal{M}} \defeq \iota^{*} \mathscr{A}
  = \frac{1}{\hbar}{\bf d}\phi
  -\frac{1}{\hbar}\,p_{i}\,{\bf d}q^{i}
  + \frac{1}{4}\tr(\mathcal{B}^{-1}{\bf d}\mathcal{A}),
\end{equation}
which, for the spherical case, reduces to 
\begin{equation}
  \label{eq:mathcalA_M-spherical}
  \mathscr{A}_{\mathcal{M}}
  = \frac{1}{\hbar}{\bf d}\phi
  -\frac{1}{\hbar}\,p_{i}\,{\bf d}q^{i}
  + \frac{d}{4b}\,{\bf d}a
  = \frac{1}{\hbar}\parentheses{ -{\bf d}\varphi
    - p_{i}\,{\bf d}q^{i}
    + r\,{\bf d}p_{r}
  }.
\end{equation}

Now, let $y_{0}$ be a point in ${\bf J}_{\mathcal{M}}^{-1}(-\hbar)$ and $d(t)$ be the {\em horizontal lift} of the curve $\bar{c}(t)$ such that $d(0) = y_{0}$, i.e., the curve defined uniquely by $\pi_{\hbar} \circ d(t) = \bar{c}(t)$ and $d(0) = y_{0}$ with
\begin{equation}
  \label{eq:horizontal_lift}
  \mathscr{A}_{\mathcal{M}}(d(t)) \cdot \dot{d}(t) = 0.
\end{equation}
Then, since the full dynamics $c(t)$ satisfies $\pi_{\hbar} \circ c(t) = \bar{c}(t)$, we have $\pi_{\hbar} \circ c(t) = \pi_{\hbar} \circ d(t)$, and thus there exists a curve $g(t)$ in $\mathbb{S}^{1}$ such that $c(t) = g(t)\,d(t)$.
By the Reconstruction Theorem (\cite[Section~2A]{MaMoRa1990} and \cite[Section~6.2]{Ma1992}), the curve $g(t)$ in $\mathbb{S}^{1}$ is given by
\begin{equation}
  \label{eq:g}
  g(t) = \exp\parentheses{ i\int_{0}^{t} \xi(s)\,ds },
\end{equation}
where
\begin{equation*}
  \xi(t) \defeq \mathscr{A}_{\mathcal{M}}(d(t)) \cdot X_{H}(d(t))
\end{equation*}
is a curve in $\mathfrak{so}(2) \cong \R$.
It is straightforward to see, from Eqs.~\eqref{eq:H}, \eqref{eq:Heller-spherical}, and \eqref{eq:mathcalA_M-spherical}, that
\begin{equation}
  \label{eq:xi}
  \xi(t) = -\frac{H(d(t))}{\hbar} = -\frac{H(c(t))}{\hbar} = -\frac{E}{\hbar},
\end{equation}
where the the second equality follows from the $\mathbb{S}^{1}$-invariance of the Hamiltonian $H$; the last equality follows since $c(t)$ is an integral curve of $X_{H}$, and so the Hamiltonian $H$ is constant along $c(t)$, and its value is determined by the initial condition $E \defeq H(c(0))$.
Therefore, we obtain
\begin{equation*}
  g(t) = \exp\parentheses{ -\frac{i}{\hbar}E\,t },
\end{equation*}
which is compatible with the result for the full quantum dynamics (see, e.g., \citet[Section~13.2]{Mo2002}).
Then, the dynamic phase $g_{\text{dyn}} \in \mathbb{S}^{1}$ achieved over the time interval $[0,T]$ is given by
\begin{equation*}
  g_{\text{dyn}} = \exp\parentheses{ \frac{i}{\hbar} \Delta\phi_{\text{dyn}} } = \exp\parentheses{ -\frac{i}{\hbar}E\,T }.
\end{equation*}
where $\Delta\phi_{\text{dyn}}$ is the change in the angle variable $\phi$ in the coordinates for $\mathcal{M}$ (see also Eq.~\eqref{eq:chi}):
\begin{equation*}
  \Delta\phi_{\text{dyn}} = -E\,T.
\end{equation*}

\subsection{Geometric Phase}
The curvature of the principal connection form~\eqref{eq:mathcalA_M} is given by
\begin{equation*}
  \mathscr{B}_{\mathcal{M}} = {\bf d}\mathscr{A}_{\mathcal{M}}
  = \frac{1}{\hbar}\parentheses{
    {\bf d}q^{i} \wedge {\bf d}p_{i} + \frac{\hbar}{4} \mathcal{B}^{-1}_{ik} \mathcal{B}^{-1}_{lj} \d\mathcal{A}_{ij} \wedge \d\mathcal{B}_{kl}
  },
\end{equation*}
and, for the spherical case, we have
\begin{equation*}
  \mathscr{B}_{\mathcal{M}}
  = \frac{1}{\hbar}\parentheses{
    {\bf d}q^{i} \wedge {\bf d}p_{i} + \frac{d\hbar}{4b^{2}} {\bf d}a \wedge {\bf d}b
  }.
\end{equation*}
Therefore, its reduced curvature form, i.e., $\mathscr{B}_{\mathcal{M}}$ viewed as a two-form on $\overline{\mathcal{M}}_{\hbar}$, becomes
\begin{equation*}
  \overline{\mathscr{B}}_{\hbar} = \frac{1}{\hbar}\,\overline{\Omega}_{\hbar}.
\end{equation*}

Suppose that the curve of the reduced dynamics on $\overline{\mathcal{M}}_{\hbar}$ is closed with period $T$, i.e., $\bar{c}(0) = \bar{c}(T)$ for some $T > 0$.
Then, the geometric phase (holonomy) $g_{\text{geom}} \in \mathbb{S}^{1}$ achieved over the period $T$ is defined by
\begin{equation*}
  d(T) = g_{\text{geom}}\,d(0).
\end{equation*}
Let $D$ be any two-dimensional submanifold of $\overline{\mathcal{M}}_{\hbar}$ whose boundary is the curve $\bar{c}([0,T))$; then the geometric phase is given by the following reconstruction phase (see, e.g., \citet[Corollary~4.2]{MaMoRa1990}):
\begin{equation*}
  g_{\text{geom}} = \exp\parentheses{ \frac{i}{\hbar} \Delta\phi_{\text{geom}} } = \exp\parentheses{ -i\iint_{D} \bar{\mathscr{B}}_{\hbar} }
  = \exp\parentheses{ -\frac{i}{\hbar} \iint_{D} \overline{\Omega}_{\hbar} } \in \mathbb{S}^{1}.
\end{equation*}
where $\Delta\phi_{\text{geom}}$ is the change in the angle variable $\phi$:
\begin{equation}
  \label{eq:GeometricPhase}
  \Delta\phi_{\text{geom}} = -\iint_{D} \overline{\Omega}_{\hbar}.
\end{equation}
This generalizes the result of \citet{An1988b,An1990,An1991}, which was derived for the frozen Gaussian wave packet, i.e., the spherical case with $a$ and $b$ being constant.

Notice that {\em we derived the above formula as a reconstruction of the Hamiltonian dynamics in $\mathcal{M}$}; namely, we have incorporated the phase variable $\phi$ (accompanied by $\delta$) into the expression of the Gaussian wave packet \eqref{eq:chi} to write the full dynamics in $\mathcal{M}$ as a Hamiltonian system (see the discussion just above Remark~\ref{rem:normalization}), and the reconstruction of the dynamics on $\mathcal{M}$ from the reduced dynamics on $\overline{\mathcal{M}}_{\hbar}$ gave rise to the geometric phase.
This gives a natural geometric account (and generalization) of the somewhat ad-hoc calculations performed in \cite{An1988b,An1990,An1991}.

\subsection{Total Phase}
Combining the dynamic and geometric phases, we obtain the total phase change over the period $T$:
\begin{equation*}
  g_{\text{total}} = \exp\parentheses{ \frac{i}{\hbar} \Delta\phi_{\text{total}} } = g_{\text{dyn}} \cdot g_{\text{geom}}
  = \exp\brackets{ -\frac{i}{\hbar}\parentheses{ E\,T + \iint_{D} \overline{\Omega}_{\hbar} } },
\end{equation*}
or
\begin{equation*}
  \Delta\phi_{\text{total}} = \Delta\phi_{\text{dyn}} + \Delta\phi_{\text{geom}} =  -E\,T - \iint_{D} \overline{\Omega}_{\hbar},
\end{equation*}
which is similar to the rigid body phase of \citet{Mo1991b} (see also \citet{Ha1985}, \citet{An1988a}, and \citet{Le1993}).
Noting that the phase factor in \eqref{eq:chi} is $e^{i\phi/\hbar}$, it is convenient to rewrite the result as
\begin{equation}
  \label{eq:TotalPhase}
  \Delta \parentheses{ \frac{ \phi_{\text{total}} }{\hbar} } = \frac{1}{\hbar} \parentheses{ -E\,T - \iint_{D} \overline{\Omega}_{\hbar} }.
\end{equation}

Note that we made an assumption that the reduced dynamics on $\overline{\mathcal{M}}_{\hbar}$ defined by $X_{\overline{H}}$ is periodic with period $T$.
In Section~\ref{ssec:CalcOfGeomPhase} below, we will show that such a periodic orbit in $\overline{\mathcal{M}}_{\hbar}$ in fact exists for the semiclassical harmonic oscillator and calculate the explicit expression for the total phase.

If the reduced dynamics is not periodic, we do not have a simple formula for the phase change as above.
However, one may still obtain an expression for the phase factor $\phi$ in terms of the reduced solution $\bar{c}(t) = ( q(t), p(t), \mathcal{A}(t), \mathcal{B}(t) )$ defined by Eq.~\eqref{eq:Heller-reduced}.
Let us write
\begin{equation*}
  d(t) = (q(t), p(t), \mathcal{A}(t), \mathcal{B}(t), \vartheta(t), \delta(t)),
  \qquad
  c(t) = (q(t), p(t), \mathcal{A}(t), \mathcal{B}(t), \phi(t), \delta(t)).
\end{equation*}
Since $c(t) = g(t)\,d(t)$ with $g(t)$ given by Eq.~\eqref{eq:g},
\begin{equation*}
  \phi(t) = \hbar\int_{0}^{t} \xi(s)\,ds + \vartheta(t),
\end{equation*}
and so, using the expression for $\xi(t)$ in Eq.~\eqref{eq:xi},
\begin{equation*}
  \dot{\phi}(t) = \hbar\,\xi(t) + \dot{\vartheta}(t) = -H(d(t)) + \dot{\vartheta}(t).
\end{equation*}
Now, the horizontal lift equation~\eqref{eq:horizontal_lift} gives
\begin{align*}
  \dot{\vartheta} &= p_{i}\,\dot{q}^{i} - \frac{\hbar}{4}\tr(\mathcal{B}^{-1}\dot{\mathcal{A}})
  \\
  &= \frac{p^{2}}{m} + \frac{\hbar}{4m}\tr\brackets{ \mathcal{B}^{-1}(\mathcal{A}^{2} - \mathcal{B}^{2}) } + \frac{\hbar}{4} \tr\parentheses{ \mathcal{B}^{-1} \overline{ \exval{\nabla^{2}V} } },
\end{align*}
where we used the reduced equations~\eqref{eq:Heller-reduced}.
As a result, by using the expression for the Hamiltonian~\eqref{eq:H} and noting that $\mathcal{N}(\mathcal{B},\delta) = 1$ here, we obtain,
\begin{equation*}
  \dot{\phi} = \frac{p^{2}}{2m} - \overline{ \exval{V} }
      - \frac{\hbar}{2m} \tr\mathcal{B}
      + \frac{\hbar}{4} \tr\parentheses{ \mathcal{B}^{-1} \overline{ \exval{\nabla^{2}V} } },
\end{equation*}
thereby recovering the equation for $\phi$ in the full dynamics~\eqref{eq:Heller}.

\section{Asymptotic Evaluation of the Potential}
\label{sec:Potential}
One obstacle in practical applications of the semiclassical Hamiltonian system, Eq.~\eqref{eq:Heller} or \eqref{eq:Heller-reduced}, is the evaluation of the potential terms $\overline{\exval{V}}$, $\overline{\exval{\nabla V}}$, and $\overline{\exval{\Delta V}}$, which are generally given by complicated integrals (see Eqs.~\eqref{eq:avgV} and \eqref{eq:avgV-spherical}; originally due to \citet{CoKa1990}).
If the potential $V(x)$ is given as a simple polynomial, one may reduce the integrals to Gaussian integrals and obtain closed forms of them exactly; this is particularly easy for the spherical case (see Eq.~\eqref{eq:avgV-spherical}).
However, one rarely has such a simple potential $V(x)$ in problems of interest in chemical physics, and thus there is a need to approximate the potential terms.

As mentioned in Remark~\ref{rem:avgV}, Heller's formulation does not involve these averaged potential terms, but from our perspective, it can be interpreted as adopting the following simple approximations of the expectation values:
\begin{equation*}
  \overline{\exval{V}}(q, \mathcal{B}) \simeq V(q),
  \qquad
  \overline{\exval{\nabla V}}(q, \mathcal{B}) \simeq \nabla V(q),
  \qquad
  \overline{\exval{\Delta V}}(q, \mathcal{B}) \simeq \Delta V(q).
\end{equation*}
Notice, however, that {\em this approximation neglects non-classical effects coming from $\mathcal{B}$ altogether, and seems to be too crude for a semiclassical model}.

Instead, we apply Laplace's method to the integral in the potential term $\overline{\exval{V}}$ to obtain an asymptotic expansion of it.
As we shall see later, this also results in an asymptotic expansion of the Hamiltonian $H$, Eq.~\eqref{eq:H}, and then our Hamiltonian/symplectic viewpoint provides a correction term to the formulation by \citet{He1975a} and \citet{LeHe1982}.
The main result in this section, Proposition~\ref{prop:asymptoticV}, gives a multi-dimensional generalization of the expansion for the one-dimensional case in \citet{PaSc1994} with a rigorous justification.

\subsection{Non-spherical Case}
The key observation here is that the potential term $\overline{\exval{V}}$, Eq.~\eqref{eq:avgV} or \eqref{eq:avgV-spherical}, is given as a typical integral to which one applies Laplace's method for asymptotic evaluation of integrals, i.e., we have
\begin{equation*}
  \overline{\exval{V}}(q, \mathcal{B})
  = \sqrt{ \frac{\det \mathcal{B}}{(\pi\hbar)^{d}} }\, F_{\hbar}(q, \mathcal{B}),
\end{equation*}
where
\begin{equation}
  \label{eq:F_hbar}
  F_{\hbar}(q, \mathcal{B}) \defeq \int_{\R^{d}} e^{R(x)/\hbar}\, V(x)\,dx
\end{equation}
with
\begin{equation*}
  R(x) = -(x - q)^{T}\mathcal{B}(x - q).
\end{equation*}
Now, an asymptotic evaluation of the integral $F_{\hbar}(q, \mathcal{B})$ gives us the following:
\begin{proposition}
  \label{prop:asymptoticV}
  If the potential $V(x)$ is a smooth function such that $e^{\sigma R(x)/\hbar} V(x)$ is square integrable in $\R^{d}$ for some $\sigma \in [0,1)$, then the potential term $\overline{\exval{V}}$ has the asymptotic expansion
  \begin{subequations}
    \label{eq:asymptoticV}
    \begin{equation}
      \overline{\exval{V}}(q, \mathcal{B}) \sim \sum_{n=0}^{\infty} c_{n}(q, \mathcal{B})\, \hbar^{n}
      \quad \text{as} \quad \hbar \to 0,
    \end{equation}
    where
    \begin{equation}
      c_{n}(q, \mathcal{B}) \defeq \frac{1}{4^{n}} \sum_{\substack{j_{1} + \dots + j_{d} = 2n\\\text{$j_{k}$ all even}}}
      \frac{g_{j}(q)}{\prod_{k=1}^{d} b_{k}^{j_{k}/2} (j_{k}/2)!}
    \end{equation}
    and
    \begin{equation}
      g_{j}(\xi) \defeq D^{j}\tilde{V}(\mathcal{Q} \xi)
      = \pd{^{|j|}}{\xi_{1}^{j_{1}} \partial \xi_{2}^{j_{2}} \dots \partial \xi_{d}^{j_{d}}}\, \tilde{V}(\mathcal{Q} \xi),
    \end{equation}
    with $\tilde{V}(\xi) \defeq V(q + \xi)$; $b_{1}, \dots b_{d}$ are the eigenvalues of $\mathcal{B}$, and $\mathcal{Q}$ is the orthogonal matrix such that
    \begin{equation*}
      \mathcal{B} \mathcal{Q} = \mathcal{Q} \diag(b_{1}, \dots, b_{d}),
    \end{equation*}
    i.e., each of its columns is an eigenvector of $\mathcal{B}$.
  \end{subequations}
\end{proposition}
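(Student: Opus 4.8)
The plan is to read $F_\hbar(q,\mathcal{B})$ in \eqref{eq:F_hbar} as a Laplace-type integral whose phase $R(x)=-(x-q)^T\mathcal{B}(x-q)$ has a unique, nondegenerate maximum $R(q)=0$ at $x=q$, and to generate the expansion by Taylor-expanding $V$ about $q$ and integrating term by term against the Gaussian weight. First I would diagonalize: the substitution $x=q+\mathcal{Q}\eta$ (orthogonal, so the Jacobian is $1$) turns the phase into $R=-\sum_k b_k\eta_k^2$ and gives
\[
F_\hbar(q,\mathcal{B}) = \int_{\R^d} e^{-\sum_k b_k\eta_k^2/\hbar}\,W(\eta)\,d\eta, \qquad W(\eta)\defeq V(q+\mathcal{Q}\eta),
\]
so that the integral factors, to leading order, into one-dimensional Gaussians.

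The formal part of the computation then rests on the one-dimensional moment identity $\int_{\R} e^{-b\eta^2/\hbar}\eta^{2m}\,d\eta = (2m-1)!!\,(\hbar/2b)^m\sqrt{\pi\hbar/b}$, with odd moments vanishing. A monomial $\eta^j$ in the Taylor expansion of $W$ therefore contributes at order $\hbar^{|j|/2}$ and survives integration only when every $j_k$ is even. Using the collapse $(2m-1)!!/(2m)! = 1/(2^m m!)$ and cancelling $\prod_k\sqrt{\pi\hbar/b_k}=\sqrt{(\pi\hbar)^d/\det\mathcal{B}}$ against the prefactor $\sqrt{\det\mathcal{B}/(\pi\hbar)^d}$ sitting in front of $\overline{\exval{V}}$, one finds that the coefficient of $\hbar^n$ is exactly $c_n(q,\mathcal{B})$ as written, with $g_j$ the $\eta$-derivative $D^jW$ evaluated at $\eta=0$, i.e.\ at $x=q$. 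This bookkeeping, together with the Gaussian-moment algebra, is routine.

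The substance of the proof, and the step I expect to be the main obstacle, is upgrading this formal series to a genuine asymptotic expansion with a controlled remainder; this is precisely where the hypothesis that $e^{\sigma R(x)/\hbar}V(x)$ is square integrable for some $\sigma\in[0,1)$ is used. Fixing $N$, I would split $\R^d = B_\rho \cup B_\rho^c$ for a small fixed ball $B_\rho$ about $\eta=0$. On $B_\rho$, Taylor's theorem for the smooth $W$ yields a polynomial of degree $2N-1$ plus a remainder bounded by $C|\eta|^{2N}$, where $C$ is the supremum of the order-$2N$ derivatives over $\overline{B_\rho}$; integrating the polynomial (and extending the domain back to all of $\R^d$, at the cost of an $O(e^{-c/\hbar})$ error) reproduces $\sum_{n<N}c_n\hbar^n$ times $\sqrt{(\pi\hbar)^d/\det\mathcal{B}}$, while the remainder is $O(\hbar^N)$ by the bound $\int e^{-\sum_k b_k\eta_k^2/\hbar}|\eta|^{2N}\,d\eta = O(\hbar^N)\sqrt{(\pi\hbar)^d/\det\mathcal{B}}$.

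The delicate region is the tail $B_\rho^c$, where $W$ may grow and the naive estimate fails. Here I would factor $e^{-\sum_k b_k\eta_k^2/\hbar} = e^{-(1-\sigma)\sum_k b_k\eta_k^2/\hbar}\,e^{-\sigma\sum_k b_k\eta_k^2/\hbar}$ and apply Cauchy--Schwarz, pairing the second factor with $W$ so that the resulting norm $\int e^{-2\sigma\sum_k b_k\eta_k^2/\hbar}|W|^2\,d\eta$ is finite by hypothesis (and, since $R\le 0$, only decreases as $\hbar\to 0$, hence stays uniformly bounded), while the first factor integrates over $B_\rho^c$ to something exponentially small in $1/\hbar$. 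This is exactly the place where $\sigma<1$ is essential, as it leaves a genuine Gaussian to produce the exponential smallness of the tail. Collecting the three contributions gives $F_\hbar = \sqrt{(\pi\hbar)^d/\det\mathcal{B}}\big(\sum_{n<N}c_n\hbar^n + O(\hbar^N)\big)$ for every $N$, which is the claimed expansion after multiplication by the prefactor; the only real care needed is uniformity in $\hbar$ and tracking which Taylor order suffices.
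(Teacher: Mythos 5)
Your proposal is correct and takes essentially the same route as the paper's proof: a ball/tail decomposition of the integral, the standard Laplace expansion on the ball (which you re-derive by hand via diagonalization with $\mathcal{Q}$, Taylor expansion, and Gaussian moments rather than citing it), and the Cauchy--Schwarz estimate on the tail using the splitting $e^{R/\hbar}=e^{(1-\sigma)R/\hbar}\,e^{\sigma R/\hbar}$ together with the square-integrability hypothesis. The only differences are presentational---the paper works out the exponential smallness of the residual Gaussian tail explicitly (via an inscribed hypercube and one-dimensional estimates) where you assert it, while you are slightly more explicit about the uniformity in $\hbar$ of the $L^{2}$ factor.
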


\begin{proof}
  The asymptotic expansion follows from a standard result of Laplace's method (see, e.g., \citet[Section~3.7]{Mi2006}) applied to the integral $F_{\hbar}(q, \mathcal{B})$ in Eq.~\eqref{eq:F_hbar} restricted to a neighborhood of the point $x = q$.
  Hence, we need an estimate of the contribution from the remaining part of the integral to justify the expansion.
  See Appendix~\ref{sec:asymptoticV-proof} for this estimate.
\end{proof}

In particular, we can rewrite the first two terms more explicitly:
\begin{equation}
  \label{eq:avgV-asymptotic}
  \overline{\exval{V}}(q, \mathcal{B}) = V(q) + \frac{\hbar}{4} \tr\brackets{ \mathcal{B}^{-1} \nabla^{2}V(q) } + O(\hbar^{2})
  \quad \text{as} \quad \hbar \to 0.
\end{equation}
Therefore, the Hamiltonian~\eqref{eq:H} becomes, as $\hbar \to 0$,
\begin{equation*}
  H = \mathcal{N}(\mathcal{B},\delta) \braces{
    \frac{p^{2}}{2m} + V(q)
    + \frac{\hbar}{4m}\tr\brackets{ \mathcal{B}^{-1}(\mathcal{A}^{2} + \mathcal{B}^{2}) }
    + \frac{\hbar}{4} \tr\parentheses{ \mathcal{B}^{-1} \nabla^{2}V(q) }
    + O(\hbar^{2})
  }.
\end{equation*}
We may then neglect the second-order term $O(\hbar^{2})$ to obtain an approximate Hamiltonian
\begin{equation*}
  H \simeq H_{1}
  \defeq \mathcal{N}(\mathcal{B},\delta) \braces{
    \frac{p^{2}}{2m} + V(q)
    + \frac{\hbar}{4}\tr\brackets{ \mathcal{B}^{-1}\parentheses{ \frac{\mathcal{A}^{2} + \mathcal{B}^{2}}{m} + \nabla^{2}V(q) } }
  }.
\end{equation*}
Then, the Hamiltonian system ${\bf i}_{X_{H_{1}}} \Omega_{\mathcal{M}} = {\bf d}H_{1}$ gives the following approximation to Eq.~\eqref{eq:Heller}:
\begin{equation}
  \label{eq:Heller-asymptotic}
  \begin{array}{c}
    \DS
    \dot{q} = \frac{p}{m},
    \qquad
    \dot{p} = -\pd{}{q}\brackets{ V(q) + \frac{\hbar}{4} \tr\parentheses{ \mathcal{B}^{-1} \nabla^{2}V(q) } },
    \medskip\\
    \DS
    \dot{\mathcal{A}} = -\frac{1}{m}(\mathcal{A}^{2} - \mathcal{B}^{2}) - \nabla^{2}V(q),
    \qquad
    \dot{\mathcal{B}} = -\frac{1}{m}(\mathcal{A}\mathcal{B} + \mathcal{B}\mathcal{A}),
    \medskip\\
    \DS
    \dot{\phi} = \frac{p^{2}}{2m} - V(q) - \frac{\hbar}{2m} \tr\mathcal{B},
    \qquad
    \dot{\delta} = \frac{\hbar}{2m} \tr\mathcal{A}.
  \end{array}
\end{equation}
Notice a slight difference from those equations obtained in \citet{He1975a} and \citet{LeHe1982}: {\em The second equation above has a semiclassical correction proportional to $\hbar$, whereas those in \cite{He1975a, LeHe1982} are missing this term.}
Furthermore, since the correction term generally depends on $\mathcal{B}$, {\em the equations for $q$ and $p$ are not decoupled as in \citet{He1975a}.
Therefore, it is crucial to formulate the whole system---as opposed to those for $q$ and $p$ only---as a Hamiltonian system}.
We will see in Section~\ref{sec:SemiclassicalTunneling} that this semiclassical correction term in fact realizes a classically forbidden motion.

\begin{remark}
  \label{rem:quadraticV}
  If the potential $V(x)$ is quadratic, then the asymptotic expansion~\eqref{eq:asymptoticV} terminates at the second term, i.e., $c_{n} = 0$ for $n \ge 2$, and becomes exact.
  Hence $H = H_{1}$ and so Eqs.~\eqref{eq:Heller} and \eqref{eq:Heller-asymptotic} are equivalent.
  Moreover, since $\nabla^{2}V(q)$ is now constant, the second equation in \eqref{eq:Heller-asymptotic} reduces to the canonical one, and hence Eq.~\eqref{eq:Heller-asymptotic} reduces to those of \citet{He1975a} and \citet{LeHe1982}.
\end{remark}

One may reduce Eq.~\eqref{eq:Heller-asymptotic} just as in Theorem~\ref{thm:reduction}: The reduced system $\i_{X_{\overline{H}_{1}}} \overline{\Omega}_{\hbar} = \d{\overline{H}_{1}}$ with the reduced approximate Hamiltonian
\begin{equation*}
  \overline{H}_{1} \defeq \frac{p^{2}}{2m} + V(q)
    + \frac{\hbar}{4}\tr\brackets{ \mathcal{B}^{-1}\parentheses{ \frac{\mathcal{A}^{2} + \mathcal{B}^{2}}{m} + \nabla^{2}V(q) } }
\end{equation*}
gives the first four equations of \eqref{eq:Heller-asymptotic}.
Notice that the Hamiltonian is split into the classical one and a semiclassical correction proportional to $\hbar$.

\subsection{Spherical Case}
The asymptotic expansion for the spherical model from Section~\ref{sec:Spherical} follows easily from Eq.~\eqref{eq:avgV-asymptotic}: Setting $\mathcal{B} = b I_{d}$ gives
\begin{equation}
  \label{eq:avgV-asymptotic-spherical}
  \overline{\exval{V}}(q, b) = V(q) + \frac{\hbar}{4b} \Delta V(q) + O(\hbar^{2})
  \quad \text{as} \quad \hbar \to 0.
\end{equation}
Note that higher-order terms are easy to calculate for the spherical case, because $\mathcal{B} = b I_{d}$ implies that $b_{k} = b$ for $k = 1, \dots, d$ and $\mathcal{Q} = I_{d}$.
Now, Eq.~\eqref{eq:Heller-asymptotic} becomes
\begin{equation*}
  \begin{array}{c}
    \DS
    \dot{q} = \frac{p}{m},
    \qquad
    \dot{p} = -\pd{}{q}\brackets{ V(q) + \frac{\hbar}{4b} \Delta V(q) },
    \qquad
    \dot{a} = -\frac{1}{m}(a^{2} - b^{2}) - \frac{1}{d}\,\Delta V(q),
    \qquad
    \dot{b} = -\frac{2 a b}{m},
    \medskip\\
    \DS
    \dot{\phi} = \frac{p^{2}}{2m} - V(q) - \frac{d\hbar}{2m}\,b,
    \qquad
    \dot{\delta} = \frac{d\hbar}{2m}\,a.
  \end{array}
\end{equation*}

\section{Example~1: Semiclassical Harmonic Oscillator}
\label{sec:SemiclassicalHO}
In this section, we illustrate the theory developed so far by considering a simple one-dimensional harmonic oscillator.
For this special case, the system~\eqref{eq:Heller-spherical} is easily integrable as shown by \citet{He1975a}; however, we approach the problem from a more Hamiltonian perspective.
Namely, we first find the action--angle coordinates for the reduced system using the Darboux coordinates from Section~\ref{ssec:DarbouxCoords}.
As we shall see later, the action--angle coordinates give an insight into the periodic motion of the system, and facilitates our calculation of the geometric phase.

\subsection{The Hamilton--Jacobi Equation and Separation of Variables}
Consider the one-dimensional harmonic oscillator, i.e., $d = 1$ and
\begin{equation*}
  V(x) = \frac{1}{2}m\,\omega^{2}x^{2}.
\end{equation*}
Note that for the one-dimensional case, the non-spherical wave packet reduces to the spherical one.
Then, the potential term is easily calculated to give\footnote{Since $V(x)$ is quadratic, the asymptotic expansion~\eqref{eq:asymptoticV} is exact, i.e., $c_{n} = 0$ for $n \ge 2$, and so \eqref{eq:asymptoticV} gives the same result.}
\begin{equation*}
  \overline{\exval{V}}(q, b) = V(q) + \frac{m\,\omega^{2}\hbar}{4b},
\end{equation*}
and so the Hamiltonian \eqref{eq:H-spherical} is
\begin{equation*}
  H = \mathcal{N}(b,\delta) \brackets{ \frac{1}{2m} \parentheses{ p^{2} + \hbar\,\frac{a^{2} + b^{2}}{2b} } + \frac{m\,\omega^{2}}{2}\parentheses{ q^{2} + \frac{\hbar}{2b} } },
\end{equation*}
or, using the Darboux coordinates defined in Eq.~\eqref{eq:DarbouxCoords},
\begin{equation}
  \label{eq:H-Darboux-HO}
  H = \frac{p^{2}}{2m\,p_{\varphi}}
    + p_{\varphi}\frac{m\,\omega^{2}}{2} q^{2}
    + \frac{2}{m}\,r\,p_{r}^{2}
    + \frac{m\,\omega^{2}}{2}r
    + \frac{\hbar^{2}}{8 m r}\,p_{\varphi}^{2}.
\end{equation}
Then, the reduced Hamiltonian \eqref{eq:H-reduced} becomes
\begin{align*}
  \overline{H} &= \frac{1}{2m} \parentheses{ p^{2} + \hbar\,\frac{a^{2} + b^{2}}{2b} } + \frac{m\,\omega^{2}}{2}\parentheses{ q^{2} + \frac{\hbar}{2b} }
  \\
  &= \frac{1}{2m} \parentheses{ p^{2} + 4r\,p_{r}^{2} }
    + \frac{m\,\omega^{2}}{2} (q^{2} + r)
    + \frac{\hbar^{2}}{8 m r},
\end{align*}
which also follows from Eq.~\eqref{eq:H-Darboux-HO} with $p_{\varphi} = 1$.

The Hamilton--Jacobi equation for the reduced dynamics 
\begin{equation*}
  \overline{H}\parentheses{ q, r, \pd{W}{q}, \pd{W}{r} } = E
\end{equation*}
with the ansatz $W(q, r) = W_{q}(q) + W_{r}(r)$ gives
\begin{equation*}
  \frac{1}{2m} \parentheses{ \od{W_{q}}{q} }^{2}
  + \frac{2r}{m} \parentheses{ \od{W_{r}}{r} }^{2}
  + \frac{m\,\omega^{2}}{2} (q^{2} + r)
  + \frac{\hbar^{2}}{8 m r} = E.
\end{equation*}
Hence, by separation of variables, we obtain
\begin{equation*}
  \frac{1}{2m} \parentheses{ \od{W_{q}}{q} }^{2}
  + \frac{m\,\omega^{2}}{2} q^{2} = E_{1},
  \qquad
  \frac{2r}{m} \parentheses{ \od{W_{r}}{r} }^{2}
  + \frac{m\,\omega^{2}}{2}r
  + \frac{\hbar^{2}}{8 m r} = E_{r},
\end{equation*}
where $E_{1}$ and $E_{r}$ are constants such that $E_{1} + E_{r} = E$.
Thus,
\begin{equation*}
  \od{W_{q}}{q} = \pm \sqrt{
     2 m E_{1} - m^2\omega^{2} q^{2}
  },
  \qquad
  \od{W_{r}}{r} = \pm \frac{m\,\omega}{2} \sqrt{
    -1 + \frac{\alpha}{r} - \frac{L^{2}}{2r^{2}}
  },
\end{equation*}
where
\begin{equation*}
  \alpha \defeq \frac{2E_{r}}{m\,\omega^{2}},
  \qquad
  L \defeq \frac{\hbar}{\sqrt{2}\,m\,\omega},
\end{equation*}
and we assume that $L < \alpha/\sqrt{2}$ is satisfied.

\subsection{Action--Angle Coordinates}
The above solution of the Hamilton--Jacobi equation gives rise to the canonical coordinate transformation to the action--angle coordinates, i.e., $(q, r, p, p_{r}) \mapsto (\theta_{1}, \theta_{r}, I_{1}, I_{r})$.

The first pair of action--angle coordinates $(\theta_{1}, I_{1})$ are those for the classical harmonic oscillator:
Let $\gamma_{1}$ be the curve (clockwise orientation) on the $q$-$p$ plane defined by $p^{2} = \parentheses{ \tod{W_{q}}{q} }^{2}$, i.e.,
\begin{equation*}
  \frac{1}{2m} p^{2} + \frac{m\,\omega^{2}}{2} q^{2} = E_{1},
\end{equation*}
which is an ellipse whose semi-major and semi-minor axes are $\sqrt{2E_{1}/m}/\omega$ and $\sqrt{2 m E_{1}}$.
Therefore, the first action variable $I_{1}$ is given by Stokes' theorem as follows:
\begin{equation*}
  I_{1} = \frac{1}{2\pi} \oint_{\gamma_{1}} p\,{\bf d}q
  = \frac{1}{2\pi} \int_{A_{1}} {\bf d}p \wedge {\bf d}q
  = \frac{E_{1}}{\omega}
  = \frac{1}{\omega} \parentheses{ \frac{1}{2m} p^{2} + \frac{m\,\omega^{2}}{2} q^{2} },
\end{equation*}
where $A_{1}$ is the area inside the ellipse (with the orientation compatible with that of $\gamma_{1}$; see Fig~\ref{fig:PeriodicOrbits}), i.e., $\partial A_{1} = \gamma_{1}$; hence the surface integral is the area of the ellipse.

\begin{figure}[htbp]
  \centering
  \subfigure{
    \includegraphics[width=.4\linewidth]{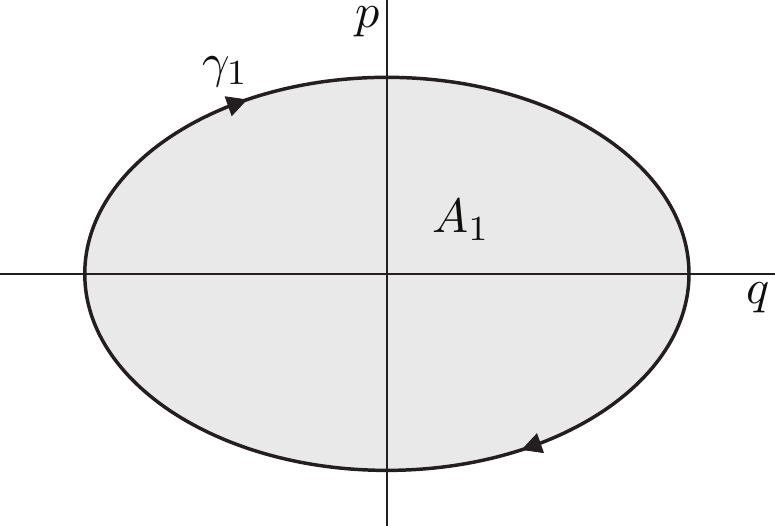}
  }
  \qquad
  \subfigure{
    \includegraphics[width=.4\linewidth]{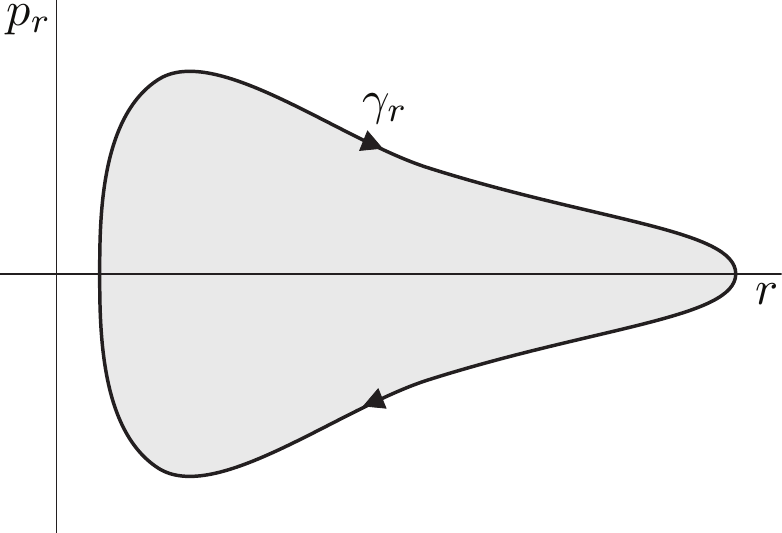}
  }
  \caption{Periodic orbits on the $q$-$p$ and $r$-$p_{r}$ planes.}
  \label{fig:PeriodicOrbits}
\end{figure}

The angle variable $\theta_{1}$ is then
\begin{equation*}
  \theta_{1} = \pd{}{I_{1}} \int \od{W_{q}}{q}\,dq
  = \int \sqrt{
     2 m\,\omega I_{1} - m^2\omega^{2} q^{2}
  }\,dq = \tan^{-1}\parentheses{ \frac{m\,\omega\, q}{p} }
\end{equation*}

Interestingly, the second pair of action--angle coordinates $(\theta_{r}, I_{r})$ is essentially the same as those for the radial part of the planar Kepler problem (see, e.g., \citet[Example~6.4 on p.~318]{JoSa1998}).
Let $\gamma_{r}$ be the curve (clockwise orientation) on the $r$-$p_{r}$ plane (see Fig~\ref{fig:PeriodicOrbits}) defined by $p_{r}^{2} = \parentheses{ \tod{W_{r}}{r} }^{2}$, i.e.,
\begin{equation*}
  \frac{2r}{m}\, p_{r}^{2}
  + \frac{m\,\omega^{2}}{2}r
  + \frac{\hbar^{2}}{8 m r} = E_{r},
\end{equation*}
or
\begin{equation*}
  p_{r} = \pm \frac{m\,\omega}{2} \sqrt{
    -1 + \frac{\alpha}{r} - \frac{L^{2}}{2r^{2}}
  }.
\end{equation*}
Setting $p_{r} = 0$ yields $r = r_{\pm} \defeq (\alpha \pm \sqrt{ \alpha^{2} - 2L^{2} } )/2$.
Then, the action variable $I_{r}$ is calculated as follows:
\begin{align*}
  I_{r} &= \frac{1}{2\pi} \oint_{\gamma_{r}} p_{r}{\bf d}r
  = \frac{m\,\omega}{2\pi} \int_{r_{-}}^{r_{+}} \sqrt{
    -1 + \frac{\alpha}{r} - \frac{L^{2}}{2r^{2}}
  }\, dr
  \\
  &= \frac{E_{r}}{2\omega} - \frac{\hbar}{4}
  = \frac{r}{m\,\omega}\, p_{r}^{2}
    + \frac{m\,\omega}{4}r
    + \frac{\hbar^{2}}{16 m\,\omega\, r}
    - \frac{\hbar}{4}.
\end{align*}
The angle variable $\theta_{r}$ is then given by
\begin{equation*}
  \theta_{r} = \pd{}{I_{r}} \int \od{W_{r}}{r}\,dr
  = \tan^{-1}\brackets{
    \frac{ 4r^{2}(m^{2}\omega^{2} - 4p_{r}^{2}) - \hbar^{2} }{ 16 m\,\omega\, r^{2} p_{r} }
  }.
\end{equation*}

The (reduced) Hamiltonian $\overline{H}$ is then written in terms of the action variables as follows:
\begin{equation*}
  \overline{H} = \parentheses{ I_{1} + 2I_{r} + \frac{\hbar}{2} } \omega.
\end{equation*}
Then, the reduced dynamics on $\overline{\mathcal{M}}_{\hbar}$ is written as
\begin{equation*}
  \dot{\theta}_{1} = \pd{\overline{H}}{I_{1}} = \omega,
  \qquad
  \dot{\theta}_{r} = \pd{\overline{H}}{I_{r}} = 2\omega,
\end{equation*}
and $I_{1}$ and $I_{r}$ are constant.
Therefore, the reduced dynamics is now transformed to a periodic flow on the torus $\mathbb{T}^{2} = \mathbb{S}^{1} \times \mathbb{S}^{1} = \{ (\theta_{1}, \theta_{r}) \}$.

\subsection{Calculation of Geometric Phase}
\label{ssec:CalcOfGeomPhase}
Recall, from Section~\ref{sec:ReconstructionAndGeometricPhase}, that we may calculate the geometric phase achieved by a periodic motion of the reduced dynamics on $\overline{\mathcal{M}}_{\hbar}$.
The previous section revealed that the reduced dynamics is in fact periodic with period $T = 2\pi/\omega$; we have also obtained the curves traced by the periodic solution on the $q$-$p$ and $r$-$p_{r}$ planes.
These results enable us to calculate the geometric phase explicitly.
First recall from \eqref{eq:GeometricPhase} with \eqref{eq:omega-canonical} that we have
\begin{equation*}
  \Delta\phi_{\text{geom}} = -\iint_{D} ({\bf d}q \wedge {\bf d}p + {\bf d}r \wedge {\bf d}p_{r}),
\end{equation*}
where $D$ is any two-dimensional submanifold in $\overline{\mathcal{M}}_{\hbar}$ whose boundary is the periodic orbit $\Gamma \subset \overline{\mathcal{M}}_{\hbar}$, i.e., the curve $c: [0, T] \to \overline{\mathcal{M}}_{\hbar}$ defined by the reduced dynamics.
Then, the projections of the curve $\Gamma$ to the $q$-$p$ and $r$-$p_{r}$ planes are the curves $\gamma_{1}$ and $\gamma_{r}$ defined above, including the orientations (note that the clockwise orientations for $\gamma_{1}$ and $\gamma_{r}$ coincide with the direction of the dynamics on $\Gamma$).
Therefore, we have
\begin{equation*}
  \Delta\phi_{\text{geom}} = \oint_{\gamma_{1}} p\,{\bf d}q + 2 \oint_{\gamma_{r}} p_{r}{\bf d}r,
\end{equation*}
since the projection of $\Gamma$ to the $r$-$p_{r}$ plane gives two cycles of $\gamma_{r}$ for a single period $T = 2\pi/\omega$.
Using the expressions for $p$ and $p_{r}$ from the above subsections, we obtain
\begin{equation*}
 \Delta\phi_{\text{geom}} = \frac{2\pi E}{\omega} - \pi\hbar = E\,T - \pi\hbar,
\end{equation*}
which gives the following Aharonov--Anandan phase (note that the phase factor in \eqref{eq:chi} is $e^{i\phi/\hbar}$):
\begin{equation*}
  \Delta\parentheses{ \frac{\phi_{\text{geom}}}{\hbar} } = \frac{E\,T}{\hbar} - \pi,
\end{equation*}
and hence the total phase change is given by, using Eq.~\eqref{eq:TotalPhase}, 
\begin{equation*}
  \Delta\parentheses{ \frac{\phi_{\text{total}}}{\hbar} } = \Delta\parentheses{ \frac{\phi_{\text{dyn}} + \phi_{\text{geom}}}{\hbar} } = -\pi.
\end{equation*}
This implies that the corresponding wave function (see Eq.~\eqref{eq:chi}) flips ``upside down'' (just like a falling cat!) after one period, i.e.,
\begin{equation*}
  \iota \circ y(T) = -\iota \circ y(0)
  \quad\text{or}\quad
  \chi(y(T);x) = -\chi(y(0);x).
\end{equation*}

\section{Example~2: Semiclassical Tunneling Escape}
\label{sec:SemiclassicalTunneling}
For the above harmonic oscillator example, one does not observe quantum effects in the trajectory $q(t)$ of the particle, as the equations for the position $q$ and momentum $p$ coincide with the classical Hamiltonian system for the harmonic oscillator as the potential is quadratic (see Remark~\ref{rem:quadraticV}).

In order to observe quantum effects taken into account by the correction term in the potential \eqref{eq:avgV-asymptotic}, let us consider the following one-dimensional example with an anharmonic potential term from \citet{PrPe2000} (see also \citet{Pr2006}):
\begin{equation}
  \label{eq:anharmonicV}
  V(x) = \frac{1}{2}m\,\omega^{2}x^{2} + c\,x^{3}.
\end{equation}
Note that the asymptotic expansion \eqref{eq:avgV-asymptotic-spherical} of the potential term $\overline{\exval{V}}(q, b)$ terminates at the first-order of $\hbar$ and gives the exact value of $\overline{\exval{V}}(q, b)$:
\begin{equation*}
  \overline{\exval{V}}(q, b) = V(q) + \hbar\,\frac{1 + 6c\,q}{4b}.
\end{equation*}

Following \citet{PrPe2000}, we choose the parameters as follows: $m = 1$, $\omega = 1$, $c = 1/10$, and $\hbar = 1$ (to make quantum effects more prominent, although this is not quite in the semiclassical regime).

The initial condition is
\begin{equation}
  \label{eq:IC-tunneling}
  (q(0),p(0),a(0),b(0),\phi(0),\delta(0)) = (1,-1,0,1,-1,\ln(\pi)/4),
\end{equation}
where the value of $\delta(0)$ is chosen so that the Gaussian wave packet is normalized.
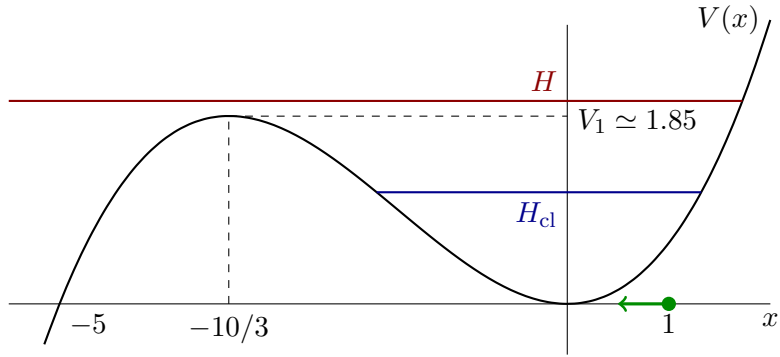
\begin{figure}[htbp]
  \begin{center}
    \begin{tikzpicture}[smooth, domain=-5.15:2, scale=1.35, samples=50]
      \draw[-] (-5.5,0) -- (2,0) node[below] {$x$};
      \draw[-] (0,-0.5) -- (0,2.75);
      \draw[thick] plot (\x,{0.5*\x*\x + 0.1*\x^3}) node[left] {$\DS V(x)$};
      \node[below right] at (-5,0) {$-5$};
      \node[below] at (1,0) {$1$};
      \fill [green!55!black] (1,0) circle (2pt);
      \draw[->, very thick, green!55!black] (1,0) -- (0.5,0);
      \node[below] at (-10/3,0) {$-10/3$};
      \draw[dashed] (-10/3,0) -- (-10/3,1.85);
      \draw[-, thick, blue!55!black] (-1.87667,1.1) -- (1.31935,1.1);
      \node[below left, blue!55!black] at (0,1.1) {$H_{\text{cl}}$};
      \draw[-, thick, red!55!black] (1.72458,2) -- (-5.5,2);
      \node[above left, red!55!black] at (0,2) {$H$};
      \draw[dashed] (-3.33,1.85) -- (0,1.85);
      \node[right] at (0,1.8) {$V_{1} \simeq 1.85$};
    \end{tikzpicture}
  \end{center}
  \caption{
    Potential \eqref{eq:anharmonicV} with $m = 1$, $\omega = 1$, $c = 1/10$; $H_{\text{cl}}$ is the classical Hamiltonian and $H$ is the semiclassical Hamiltonian~\eqref{eq:H-spherical} with initial condition~\eqref{eq:IC-tunneling}.
    $H_{\text{cl}} < V_{1} < H$ implies classical trajectory is trapped but semiclassical trajectory may escape through the potential barrier.
    The green dot and arrow on the $x$-axis indicate the initial position and velocity of the particle.
  }
  \label{fig:anharmonicV}
\end{figure}

\begin{figure}[htbp]
  \centering
  \subfigure[Phase portrait]{
    \includegraphics[width=.45\linewidth]{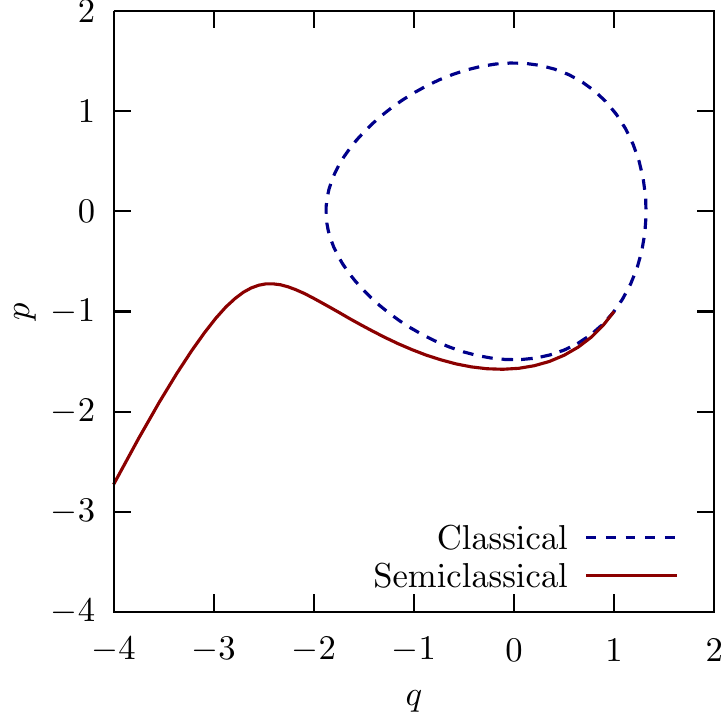}
  }
  \quad
  \subfigure[Time evolution of the position of the particle]{
    \includegraphics[width=.45\linewidth]{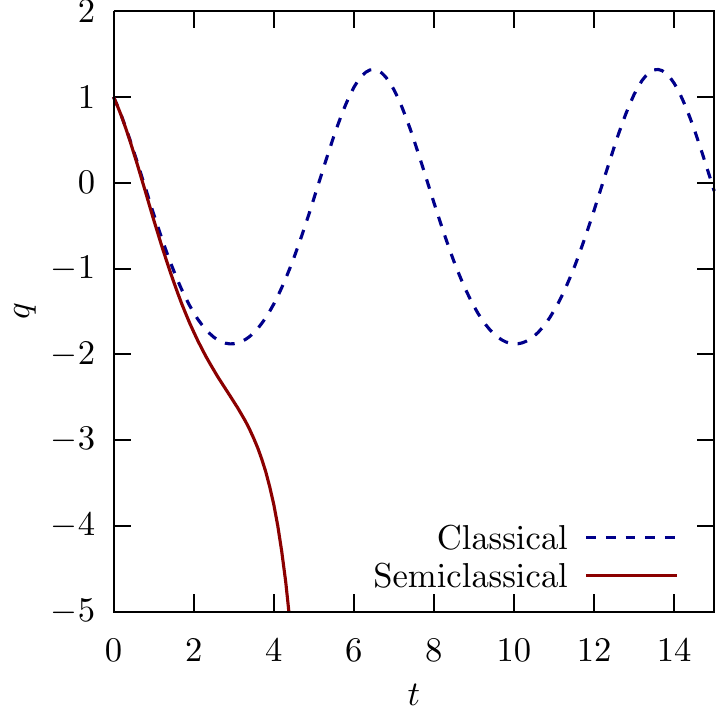}
  }
  \caption{Semiclassical Tunneling: The semiclassical solution escapes the potential well inside which the classical solution is trapped.}
  \label{fig:SemiclassicalTunneling}
\end{figure}

Figure~\ref{fig:anharmonicV} shows the shape of the potential as well as the values of the classical and semiclassical Hamiltonians $H_{\text{cl}}$ and $H$ with the above initial condition~\eqref{eq:IC-tunneling}.
The potential has a local maximum at $x = -10/3$ with $V_{1} \defeq V(-10/3) \simeq 1.85$, whereas $H_{\text{cl}} = p^{2}/2m + V(q) \simeq 1.1$ and $H = 2$ and so $H_{\text{cl}} < V_{1} < H$; this implies that the classical trajectory is trapped inside the potential well and undergoes a periodic motion, whereas the semiclassical trajectory may tunnel through the wall at $x = -10/3$.
Intuitively speaking, the variables $(q,p)$ in the semiclassical equations may ``borrow'' some energy from the variables $(a,b)$ (see the expression for the semiclassical Hamiltonian~\eqref{eq:H-spherical}) and therefore may have extra energy to climb over the wall.

Figure~\ref{fig:SemiclassicalTunneling} shows the phase portrait and the time evolution of the position of the particle for both the semiclassical and classical solutions with the same initial condition for $(q,p)$.
We used the variational splitting integrator of \citet{FaLu2006} (see also \citet[Section~IV.4]{Lu2008}) for the semiclassical solution and the St\"ormer--Verlet method~\cite{Ve1967} for the classical solution; the time step is $0.1$ in both cases.
(It is perhaps worth mentioning that the variational splitting integrator is a natural extension of the St\"ormer--Verlet method in the sense that it recovers the St\"ormer--Verlet method as $\hbar \to 0$~\cite{FaLu2006}.)
We observe that the semiclassical ``particle'' indeed escapes from the potential well, whereas the classical solution is trapped inside the potential well.
Compare them with Figures~2 and 3 of \citet{PrPe2000}: Our semiclassical solution seems to be almost identical to the solution of their second-order Quantized Hamiltonian Dynamics (QHD), which is shown to approximate the solution of the Schr\"odinger equation much better than the classical solution does~\cite{PrPe2000}.
In fact, as mentioned in \cite{PrPe2000} and \cite{Pr2006}, the second-order QHD yields equations similar to those of \citet{He1975a} with a correction term to the classical potential.
It is likely that the second-order QHD is identical to \eqref{eq:Heller-asymptotic}, although the relationship between them is not so clear to us as the two approaches are quite different in spirit: The Gaussian wave packet approach uses the Schr\"odinger picture, whereas the QHD employs the Heisenberg picture.
It is an interesting future work to bridge the gap between the two.

\section{Conclusion}
We gave a symplectic-geometric account of Heller's semiclassical Gaussian wave packet dynamics that builds upon on a series of works by Lubich and his collaborators.
Our point of view is helpful in understanding how semiclassical dynamics inherits the geometric structures of quantum dynamics.
Particularly, the geometry behind the symplectic reduction and reconstruction of semiclassical dynamics is inherited from quantum dynamics in a natural way.
We also derived an asymptotic formula for the expected value of the potential to approximate the potential terms appearing in the system of equations for semiclassical dynamics.
The asymptotic formula not only naturally generalizes Heller's approximation but also indicates that it is crucial to couple the equations for the classical position and momentum variables $q$ and $p$ with those of the other quantum variables, thereby justifying our point of view of regarding the whole system as a Hamiltonian system.

\section*{Acknowledgments}
We would like to thank Christian Lubich, Peter Miller, Peter Smereka, and the referees for their helpful comments and discussions.
A connection with coherent states along the lines of Section~\ref{ssec:Littlejohn} was brought to our attention by David Meier, Richard Montgomery, and one of the referees.
This material is based upon work supported by the National Science Foundation under the Faculty Early Career Development (CAREER) award DMS-1010687, the FRG grant DMS-1065972, and the AMS--Simons Travel Grant.

\appendix

\section{Proof of Proposition~\ref{prop:asymptoticV}}
\label{sec:asymptoticV-proof}
First take the ball $B_{\eps}(q)$ and then split the integral $F_{\hbar}$ in Eq.~\eqref{eq:F_hbar} as follows:
\begin{equation*}
  F_{\hbar}(q, \mathcal{B}) = \int_{B_{\eps}(q)} e^{R(x)/\hbar}\, V(x)\,dx +  \int_{\R^{d}\backslash B_{\eps}(q)} e^{R(x)/\hbar}\, V(x)\,dx.
\end{equation*}
Then, since $V$ is assumed to be smooth, we obtain the asymptotic expansion \eqref{eq:asymptoticV} by applying the standard result of Laplace's method (see, e.g., \citet[Section~3.7]{Mi2006}) to the first term.
We need an estimate of the second term to justify the expansion.
Introducing the variable $\xi \defeq x - q$ and defining
\begin{equation*}
  \tilde{R}(\xi) \defeq R(q + \xi) = -\xi^{T}\mathcal{B}\xi
  \quad \text{and} \quad
  \tilde{V}(\xi) \defeq V(q + \xi), 
\end{equation*}
we have
\begin{align*}
  \int_{\R^{d}\backslash B_{\eps}(q)} e^{R(x)/\hbar}\, V(x)\,dx
  &= \int_{\R^{d}\backslash B_{\eps}(0)} e^{\tilde{R}(\xi)/\hbar}\, \tilde{V}(\xi)\,d\xi
  \\
  &\le \parentheses{ \int_{\R^{d}\backslash B_{\eps}(0)} e^{2(1-\sigma)\tilde{R}(\xi)/\hbar}\,d\xi }^{1/2}
  \parentheses{ \int_{\R^{d}\backslash B_{\eps}(0)} \brackets{ e^{\sigma\tilde{R}(\xi)/\hbar}\, \tilde{V}(\xi) }^{2}\,d\xi }^{1/2},
\end{align*}
where we used the Cauchy--Schwarz inequality.
The second term is bounded by assumption.
To evaluate the first term, we introduce the new variable $\eta = \mathcal{Q}^{T} \xi$; then the exponent simplifies to
\begin{equation*}
  2(1-\sigma)\tilde{R}(\mathcal{Q}\eta) = -2(1-\sigma)\,\eta^{T}\mathcal{Q}^{T}\mathcal{B}\mathcal{Q} \eta = -\sum_{k=1}^{d} \beta_{k} \eta_{k}^{2},
\end{equation*}
where we set $\beta_{k} \defeq 2(1-\sigma)b_{k}$, which is positive.
Thus, we have
\begin{align*}
  \int_{\R^{d}\backslash B_{\eps}(0)} e^{2(1-\sigma)\tilde{R}(\xi)/\hbar}\,d\xi
  &= \int_{\R^{d}\backslash B_{\eps}(0)} e^{-\sum_{k=1}^{d} \beta_{k} \eta_{k}^{2}/\hbar}\,d\eta
  \\
  &\le \int_{\R^{d}\backslash C_{\eps/\sqrt{d}}} e^{-\sum_{k=1}^{d} \beta_{k} \eta_{k}^{2}/\hbar}\,d\eta,
  \\
  &= \prod_{k=1}^{d} \int_{|\eta_{k}| \ge \eps/\sqrt{d}} e^{-\beta_{k} \eta_{k}^{2}/\hbar}\,d\eta_{k},
\end{align*}
where $C_{\eps/\sqrt{d}}$ is the hypercube defined by 
\begin{equation*}
  C_{\eps/\sqrt{d}} \defeq \setdef{ \eta \in \R^{d} }{ |\eta_{k}| < \frac{\eps}{\sqrt{d}} \text{ for } k = 1, \dots, d },
\end{equation*}
which is clearly contained in $B_{\eps}(0)$.
Writing $\eps_{d} \defeq \eps/\sqrt{d}$ for shorthand, the Cauchy--Schwarz inequality gives
\begin{align*}
  \int_{|\eta_{k}| \ge \eps_{d}} e^{-\beta_{k} \eta_{k}^{2}/\hbar}\,d\eta_{k}
  &= 2 \int_{\eps_{d}}^{\infty} e^{-\beta_{k} \eta_{k}^{2}/\hbar}\,d\eta_{k}
  \\
  &= 2e^{\beta_{k}\eps_{d}^{2}/\hbar} \int_{\eps_{d}}^{\infty} e^{-\beta_{k} (\eta_{k} - \eps_{d})^{2}/\hbar}\, e^{-2\beta_{k} \eta_{k} \eps_{d}/\hbar}\,d\eta_{k}
  \\
  &\le 2e^{\beta_{k}\eps_{d}^{2}/\hbar} \parentheses{ \int_{\eps_{d}}^{\infty} e^{-2\beta_{k} (\eta_{k} - \eps_{d})^{2}/\hbar}\,d\eta_{k} }^{1/2}
  \parentheses{ \int_{\eps_{d}}^{\infty} e^{-4\beta_{k} \eta_{k} \eps_{d}/\hbar}\,d\eta_{k} }^{1/2}
  \\
  &= \parentheses{ \frac{d \pi}{8 \eps^{2}} }^{1/4} \parentheses{ \frac{\hbar}{ \beta_{k} } }^{3/4}\, e^{-\beta_{k} \eps^{2}/(d\hbar)}.
  \\
  &= \parentheses{ \frac{d \pi}{8 \eps^{2}} }^{1/4} \parentheses{ \frac{\hbar}{ 2(1-\sigma)b_{k} } }^{3/4}\, e^{-\beta_{k} \eps^{2}/(d\hbar)}.
\end{align*}
Therefore,
\begin{equation*}
  \int_{\R^{d}\backslash B_{\eps}(0)} e^{\tilde{R}(\xi)/\hbar}\,d\xi
  \le \parentheses{ \frac{d \pi}{8 \eps^{2}} }^{d/4} \parentheses{ \frac{\hbar^{d}}{ [2(1-\sigma)]^{d}\det\mathcal{B} } }^{3/4}\,
  \exp\parentheses{ -\frac{2\eps^{2}(1-\sigma)\tr\mathcal{B}}{d\hbar} }
  = o(\hbar^{p})
\end{equation*}
as $\hbar \to 0$ for any real $p$, since the above exponential term is dominated by any real power of $\hbar$.
Therefore,
\begin{equation*}
  \int_{\R^{d}\backslash B_{\eps}(q)} e^{R(x)/\hbar}\, V(x)\,dx = o(\hbar^{p})
  \quad \text{as} \quad \hbar \to 0
\end{equation*}
for any real $p$ as well, and so the above integral has no contribution to the asymptotic expansion.

\bibliography{SympSemiCl}

\begin{thebibliography}{47}
\providecommand{\natexlab}[1]{#1}
\providecommand{\url}[1]{\texttt{#1}}
\expandafter\ifx\csname urlstyle\endcsname\relax
  \providecommand{\doi}[1]{doi: #1}\else
  \providecommand{\doi}{doi: \begingroup \urlstyle{rm}\Url}\fi

\bibitem[Aharonov and Anandan(1987)]{AhAn1987}
Y.~Aharonov and J.~Anandan.
\newblock Phase change during a cyclic quantum evolution.
\newblock \emph{Physical Review Letters}, 58\penalty0 (16):\penalty0
  1593--1596, 1987.

\bibitem[Anandan(1988{\natexlab{a}})]{An1988a}
J.~Anandan.
\newblock Geometric angles in quantum and classical physics.
\newblock \emph{Physics Letters A}, 129\penalty0 (4):\penalty0 201--207,
  1988{\natexlab{a}}.

\bibitem[Anandan(1988{\natexlab{b}})]{An1988b}
J.~Anandan.
\newblock Nonlocal aspects of quantum phases.
\newblock \emph{Annales de l'institut Henri Poincar\'e (A)}, 49\penalty0
  (3):\penalty0 271--286, 1988{\natexlab{b}}.

\bibitem[Anandan(1990)]{An1990}
J.~Anandan.
\newblock Geometric phase for cyclic motions and the quantum state space
  metric.
\newblock \emph{Physics Letters A}, 147\penalty0 (1):\penalty0 3--8, 1990.

\bibitem[Anandan(1991)]{An1991}
J.~Anandan.
\newblock A geometric approach to quantum mechanics.
\newblock \emph{Foundations of Physics}, 21\penalty0 (11):\penalty0 1265--1284,
  1991.

\bibitem[Berry and Mount(1972)]{BeMo1972}
M.~V. Berry and K.~E. Mount.
\newblock Semiclassical approximations in wave mechanics.
\newblock \emph{Reports on Progress in Physics}, 35\penalty0 (1):\penalty0
  315--397, 1972.

\bibitem[Broeckhove et~al.(1989)Broeckhove, Lathouwers, and
  Van~Leuven]{BrLaVa1989}
J.~Broeckhove, L.~Lathouwers, and P.~Van~Leuven.
\newblock Time-dependent variational principles and conservation laws in
  wavepacket dynamics.
\newblock \emph{Journal of Physics A: Mathematical and General}, 22\penalty0
  (20):\penalty0 4395--4408, 1989.

\bibitem[Chru{\'s}ci{\'n}ski and Jamio{\l}kowski(2004)]{ChJa2004}
D.~Chru{\'s}ci{\'n}ski and A.~Jamio{\l}kowski.
\newblock \emph{Geometric phases in classical and quantum mechanics}.
\newblock Birkh{\"a}user, Boston, 2004.

\bibitem[Coalson and Karplus(1990)]{CoKa1990}
R.~D. Coalson and M.~Karplus.
\newblock Multidimensional variational {G}aussian wave packet dynamics with
  application to photodissociation spectroscopy.
\newblock \emph{Journal of Chemical Physics}, 93\penalty0 (6):\penalty0
  3919--3930, 1990.

\bibitem[Faou and Lubich(2006)]{FaLu2006}
E.~Faou and C.~Lubich.
\newblock A {P}oisson integrator for {G}aussian wavepacket dynamics.
\newblock \emph{Computing and Visualization in Science}, 9\penalty0
  (2):\penalty0 45--55, 2006.

\bibitem[Feynman and Hibbs(2010)]{FeHi2010}
R.~P. Feynman and A.~R. Hibbs.
\newblock \emph{Quantum Mechanics and Path Integrals}.
\newblock Dover, emended edition, 2010.

\bibitem[Folland(1989)]{Fo1989}
G.~B. Folland.
\newblock \emph{Harmonic Analysis in Phase Space}.
\newblock Princeton University Press, 1989.

\bibitem[Gotay and Nester(1979)]{GoNe1979b}
M.~J. Gotay and J.~M. Nester.
\newblock Presymplectic {L}agrangian systems. {I}: the constraint algorithm and
  the equivalence theorm.
\newblock \emph{Annales de l'institut Henri Poincar\'e (A)}, 30\penalty0
  (2):\penalty0 129--142, 1979.

\bibitem[Gotay and Nester(1980)]{GoNe1980}
M.~J. Gotay and J.~M. Nester.
\newblock Presymplectic {L}agrangian systems. {II}: the second-order equation
  problem.
\newblock \emph{Annales de l'institut Henri Poincar\'e (A)}, 32\penalty0
  (1):\penalty0 1--13, 1980.

\bibitem[Gotay et~al.(1978)Gotay, Nester, and Hinds]{GoNeHi1978}
M.~J. Gotay, J.~M. Nester, and G.~Hinds.
\newblock Presymplectic manifolds and the {D}irac--{B}ergmann theory of
  constraints.
\newblock \emph{Journal of Mathematical Physics}, 19\penalty0 (11):\penalty0
  2388--2399, 1978.

\bibitem[Grossmann(2006)]{Gr2006}
F.~Grossmann.
\newblock A semiclassical hybrid approach to many particle quantum dynamics.
\newblock \emph{The Journal of Chemical Physics}, 125\penalty0 (1):\penalty0
  014111--9, 2006.

\bibitem[Hagedorn(1980)]{Ha1980}
G.~A. Hagedorn.
\newblock Semiclassical quantum mechanics.
\newblock \emph{Communications in Mathematical Physics}, 71\penalty0
  (1):\penalty0 77--93, 1980.

\bibitem[Hagedorn(1998)]{Ha1998}
G.~A. Hagedorn.
\newblock Raising and lowering operators for semiclassical wave packets.
\newblock \emph{Annals of Physics}, 269\penalty0 (1):\penalty0 77--104, 1998.

\bibitem[Hannay(1985)]{Ha1985}
J.~H. Hannay.
\newblock Angle variable holonomy in adiabatic excursion of an integrable
  {H}amiltonian.
\newblock \emph{Journal of Physics A: Mathematical and General}, 18\penalty0
  (2), 1985.

\bibitem[Heller(1975)]{He1975a}
E.~J. Heller.
\newblock Time-dependent approach to semiclassical dynamics.
\newblock \emph{Journal of Chemical Physics}, 62\penalty0 (4):\penalty0
  1544--1555, 1975.

\bibitem[Heller(1976)]{He1976b}
E.~J. Heller.
\newblock Classical {$S$}-matrix limit of wave packet dynamics.
\newblock \emph{Journal of Chemical Physics}, 65\penalty0 (11):\penalty0
  4979--4989, 1976.

\bibitem[Heller(1981)]{He1981}
E.~J. Heller.
\newblock Frozen {G}aussians: A very simple semiclassical approximation.
\newblock \emph{Journal of Chemical Physics}, 75\penalty0 (6):\penalty0
  2923--2931, 1981.

\bibitem[Heller(1991)]{Heller-LesHouches}
E.~J. Heller.
\newblock Wavepacket dynamics and quantum chaology.
\newblock In M.J. Giannoni, A.~Voros, and J.~Zinn-Justin, editors, \emph{Chaos
  and quantum physics}, pages 547--663. North-Holland, 1991.

\bibitem[Jos{\'e} and Saletan(1998)]{JoSa1998}
J.~V. Jos{\'e} and E.~J. Saletan.
\newblock \emph{Classical Dynamics: {A} Contemporary Approach}.
\newblock Cambridge University Press, 1998.

\bibitem[Lee and Heller(1982)]{LeHe1982}
S.-Y. Lee and E.~J. Heller.
\newblock Exact time-dependent wave packet propagation: Application to the
  photodissociation of methyl iodide.
\newblock \emph{The Journal of Chemical Physics}, 76\penalty0 (6):\penalty0
  3035--3044, 1982.

\bibitem[Levi(1993-09-01)]{Le1993}
M.~Levi.
\newblock Geometric phases in the motion of rigid bodies.
\newblock \emph{Archive for Rational Mechanics and Analysis}, 122\penalty0
  (3):\penalty0 213--229, 1993-09-01.

\bibitem[Littlejohn(1986)]{Li1986}
R.~G. Littlejohn.
\newblock The semiclassical evolution of wave packets.
\newblock \emph{Physics Reports}, 138\penalty0 (4-5):\penalty0 193--291, 1986.

\bibitem[Littlejohn(1988)]{Li1988}
R.~G. Littlejohn.
\newblock Cyclic evolution in quantum mechanics and the phases of
  {B}ohr--{S}ommerfeld and {M}aslov.
\newblock \emph{Physical Review Letters}, 61\penalty0 (19):\penalty0
  2159--2162, 1988.

\bibitem[Lubich(2008)]{Lu2008}
C.~Lubich.
\newblock \emph{From quantum to classical molecular dynamics: reduced models
  and numerical analysis}.
\newblock European Mathematical Society, Z{\"u}rich, Switzerland, 2008.

\bibitem[Marsden(1992)]{Ma1992}
J.~E. Marsden.
\newblock \emph{Lectures on Mechanics}.
\newblock Cambridge University Press, 1992.

\bibitem[Marsden and Ratiu(1999)]{MaRa1999}
J.~E. Marsden and T.~S. Ratiu.
\newblock \emph{Introduction to Mechanics and Symmetry}.
\newblock Springer, 1999.

\bibitem[Marsden and Weinstein(1974)]{MaWe1974}
J.~E. Marsden and A.~Weinstein.
\newblock Reduction of symplectic manifolds with symmetry.
\newblock \emph{Reports on Mathematical Physics}, 5\penalty0 (1):\penalty0
  121--130, 1974.

\bibitem[Marsden et~al.(1990)Marsden, Montgomery, and Ratiu]{MaMoRa1990}
J.~E. Marsden, R.~Montgomery, and T.~S. Ratiu.
\newblock \emph{Reduction, symmetry, and phases in mechanics}, volume~88 of
  \emph{Memoirs of the American Mathematical Society}.
\newblock American Mathematical Society, 1990.

\bibitem[Marsden et~al.(2007)Marsden, Misiolek, Ortega, Perlmutter, and
  Ratiu]{MaMiOrPeRa2007}
J.~E. Marsden, G.~Misiolek, J.~P. Ortega, M.~Perlmutter, and T.~S. Ratiu.
\newblock \emph{Hamiltonian Reduction by Stages}.
\newblock Springer, 2007.

\bibitem[McDuff and Salamon(1999)]{McSa1999}
D.~McDuff and D.~Salamon.
\newblock \emph{Introduction to Symplectic Topology}.
\newblock Oxford University Press, 1999.

\bibitem[Miller(2006)]{Mi2006}
P.~D. Miller.
\newblock \emph{Applied Asymptotic Analysis}.
\newblock American Mathematical Society, Providence, R.I., 2006.

\bibitem[Montgomery(1991)]{Mo1991b}
R.~Montgomery.
\newblock How much does the rigid body rotate? {A} {B}erry's phase from the
  18th century.
\newblock \emph{American Journal of Physics}, 59\penalty0 (5):\penalty0
  394--398, 1991.

\bibitem[Montgomery(2002)]{Mo2002}
R.~Montgomery.
\newblock \emph{A Tour of Subriemannian Geometries, Their Geodesics and
  Applications}.
\newblock American Mathematical Society, 2002.

\bibitem[Pattanayak and Schieve(1994)]{PaSc1994}
A.~K. Pattanayak and W.~C. Schieve.
\newblock Gaussian wave-packet dynamics: Semiquantal and semiclassical
  phase-space formalism.
\newblock \emph{Physical Review E}, 50\penalty0 (5):\penalty0 3601--3615, 1994.

\bibitem[Prezhdo(2006)]{Pr2006}
O.~V. Prezhdo.
\newblock Quantized {H}amilton dynamics.
\newblock \emph{Theoretical Chemistry Accounts}, 116\penalty0 (1-3):\penalty0
  206--218, 2006.

\bibitem[Prezhdo and Pereverzev(2000)]{PrPe2000}
O.~V. Prezhdo and Y.~V. Pereverzev.
\newblock Quantized {H}amilton dynamics.
\newblock \emph{Journal of Chemical Physics}, 113\penalty0 (16):\penalty0
  6557--6565, 2000.

\bibitem[Russo and Smereka(2013)]{RuSm2013}
G.~Russo and P.~Smereka.
\newblock The {G}aussian wave packet transform: Efficient computation of the
  semi-classical limit of the {S}chr{\"o}dinger equation. {P}art
  1--{F}ormulation and the one dimensional case.
\newblock \emph{Journal of Computational Physics}, 233\penalty0 (0):\penalty0
  192--209, 2013.

\bibitem[Siegel(1943)]{Si1943}
C.~L. Siegel.
\newblock Symplectic geometry.
\newblock \emph{American Journal of Mathematics}, 65\penalty0 (1):\penalty0
  1--86, 1943.

\bibitem[Simon(1983)]{Si1983}
B.~Simon.
\newblock Holonomy, the quantum adiabatic theorem, and {B}erry's phase.
\newblock \emph{Physical Review Letters}, 51\penalty0 (24), 1983.

\bibitem[Simon et~al.(1988)Simon, Sudarshan, and Mukunda]{SiSuMu1988}
R.~Simon, E.~C.~G. Sudarshan, and N.~Mukunda.
\newblock Gaussian pure states in quantum mechanics and the symplectic group.
\newblock \emph{Physical Review A}, 37\penalty0 (8):\penalty0 3028--3038, 1988.

\bibitem[Tannor(2007)]{Ta2007}
D.~J. Tannor.
\newblock \emph{Introduction to Quantum Mechanics: {A} Time-Dependent
  Perspective}.
\newblock University Science Books, 2007.

\bibitem[Verlet(1967)]{Ve1967}
L.~Verlet.
\newblock Computer "experiments" on classical fluids. {I}. thermodynamical
  properties of {L}ennard--{J}ones molecules.
\newblock \emph{Physical Review}, 159\penalty0 (1):\penalty0 98--103, 1967.

\end{thebibliography}
\bibliographystyle{plainnat}

\end{document}